\documentclass[lettersize,journal]{IEEEtran}
\usepackage{amsmath,amsfonts}
\usepackage{algorithmic}
\usepackage{algorithm}
\usepackage{array}
\usepackage[caption=false,font=footnotesize,labelfont=rm,textfont=rm]{subfig}
\usepackage{textcomp}
\usepackage{stfloats}
\usepackage{url}
\usepackage{verbatim}
\usepackage{graphicx}
\usepackage{cite}
\usepackage{hyperref}
\usepackage{tabularray}
\usepackage{color}
\usepackage{booktabs}
\usepackage{amsmath}
\usepackage{amssymb}
\usepackage{tikz}
\usetikzlibrary{shapes.geometric, arrows}

\tikzstyle{startstop} = [rectangle, rounded corners, minimum width=1.2cm, minimum height=1cm,text centered, draw=black, fill=red!30]
\tikzstyle{io} = [trapezium, trapezium left angle=70, trapezium right angle=110, minimum width=1cm, minimum height=1cm, text centered, draw=black, fill=blue!30]
\tikzstyle{process} = [rectangle, minimum width=1cm, minimum height=1cm, text centered, draw=black, fill=orange!30]
\tikzstyle{decision} = [diamond, minimum width=1cm, minimum height=1cm, text centered, draw=black, fill=green!30]
\tikzstyle{arrow} = [thick,->,>=stealth]

\newtheorem{theorem}{Theorem} 
\newtheorem{definition}{Definition} 
\newtheorem{lemma}{Lemma} 

\begin{document}

\title{3D Spherical Directly-Connected Antenna Array for Low-Altitude UAV Swarm ISAC}

\author{Haoyu Jiang, Zhenjun Dong, Zhiwen Zhou, Yong Zeng,~\IEEEmembership{Fellow,~IEEE}

\thanks{ 
This work was supported by the National Natural Science Foundation of China under Grant 62571116, and Natural Science Foundation for Distinguished Young Scholars of Jiangsu Province with grant number BK20240070.





H. Jiang, Z. Zhou, and Y. Zeng are with the National Mobile Communications Research Laboratory, Southeast University, Nanjing 210096, China. Y. Zeng is also with the Purple Mountain Laboratories, Nanjing 211111, China. Z. Dong is now with the Purple Mountain Laboratories, Nanjing 211111, China. She was with the National Mobile Communications Research Laboratory, Southeast University, Nanjing 210096, China. (e-mail: \{230258936, zhenjun\_dong, zhiwen\_zhou, yong\_zeng\}@seu.edu.cn). (Corresponding author: Yong Zeng.)

}

}



\maketitle

\begin{abstract}
Recently a novel multi-antenna architecture termed ray antenna array (RAA) was proposed, where several simple uniform linear arrays (sULAs) are arranged in a ray-like structure to enhance communication and sensing performance. By eliminating the need for phase shifters, it also significantly reduces hardware costs. However, RAA is prone to signal blockage and has no elevation angle resolution capability in three-dimensional (3D) scenarios. To address such issues, in this paper we propose a novel spherical directly-connected antenna array (DCAA), which composes of multiple simple uniform planar arrays (sUPAs) placed over a spherical surface. All elements within each sUPA are directly connected. Compared to conventional arrays with hybrid analog/digital beamforming (HBF), DCAA significantly reduces hardware cost, improves energy focusing, and provides superior and uniform angular resolution for 3D space. These advantages make DCAA particularly suitable for integrated sensing and communication (ISAC) in low-altitude unmanned aerial vehicles (UAV) swarm scenarios, where targets may frequently move away from the boresight of traditional arrays, degrading both communication and sensing performance. 
Simulation results demonstrate that the proposed spherical DCAA achieves significantly better angular resolution and higher spectral efficiency than conventional array with  HBF, highlighting its strong potential for UAV swarm ISAC systems.
\end{abstract}

\begin{IEEEkeywords}
RAA, DCAA, XL-MIMO, low-altitude UAV, ISAC, UAV swarm. 

\end{IEEEkeywords}

\section{Introduction}
\IEEEPARstart{L}{ow-altitude} unmanned aerial vehicle (UAV) swarms, characterized by their massive number, collective intelligence, and collaborative operation, are expected to play an important role in future low-altitude economy\cite{Yong_UAV,yuxuan,Low_Altitude_Economy1,Low_Altitude_Economy2,UAV_communication}. By overcoming the inherent limitations of single UAV in terms of coverage, payload, and endurance, swarms can significantly expand the application horizons of aerial platforms, enabling advanced services such as large-scale coordinated delivery, environmental monitoring, and smart city infrastructure management\cite{UAV_swarm0,UAV_swarm1,UAV_swarm4}. However, the expected rapid increase in aerial traffic density and operational complexity within low-altitude airspace creates stringent requirements for both reliable communication and accurate sensing\cite{UAV_swarm2,UAV_swarm3}. This necessity has sparked significant interest in integrated sensing and communication (ISAC), a key enabling technology for 6G networks that fuses sensing and communication seamlessly using shared spectral resources, hardware, and signal processing frameworks\cite{liufan_ISAC,Qianglong_tutorial,xiaozhiqiang_ISAC,xiaoISAC}. For UAV swarm applications, ISAC is essential for ensuring safe and efficient operations, facilitating tasks such as real-time swarm localization, collision avoidance, and cooperative trajectory planning\cite{UAV_swarm00,UAV3}.

Despite its promise, deploying ISAC for low-altitude UAV swarms introduces unique challenges. UAVs are inherently constrained by size, weight, and power limitations, which restrict their onboard communication and sensing capabilities\cite{Yong_UAV,UAV,UAV2}. Furthermore, the high mobility and 3D spatial distribution of swarm members require ISAC systems with exceptional angular resolution to distinguish closely spaced targets and support high-gain beamforming for communication\cite{ISAC_UAV,UAV_survey}. Conventional multiple-input multiple-output (MIMO) architectures, which form the backbone of most contemporary ISAC systems, often struggle to meet these demands cost-effectively\cite{haiquan_XLMIMO,SparseMIMO}. For example, traditional array architectures like uniform linear arrays (ULAs) and uniform planar arrays (UPAs) suffer from fundamental limitations: their angular resolution degrades significantly for signal directions away from the array boresight, and achieving high resolution requires a large physical aperture or a massive number of antenna elements\cite{RAA_ISAC}. More critically, realizing such arrays with fully digital or hybrid analog/digital beamforming (HBF) require a large number of radio frequency (RF) chains and phase shifters, which are expensive and power-hungry especially at high-frequency bands like millimeter-wave (mmWave) and Terahertz (THz)\cite{Analog_Beamforming,Hybrid_Precoding}. This creates a critical cost-performance trade-off that hinders the scalable deployment of ISAC for widespread UAV swarm applications.

To address the hardware cost and implementation challenges in high-frequency systems, researchers have explored various innovative antenna architectures that reduce the required phase shifters or RF chains. Notable examples include lens antenna arrays\cite{lens_antenna_array,7416205}, which transform the signal from the antenna space to a lower-dimensional beamspace using electromagnetic lenses, thereby significantly reducing the number of required RF chains. However, they often require bulky and precise lens components, which increase size and cost. Fluid antennas\cite{Fluid_Antenna,Fluid_Antenna2} and movable antennas\cite{Movable_Antennas,Movable_Antennas2,Movable_Antenna_dong} represent another direction, where the antenna shape or position can be dynamically optimized at the transmitter or receiver to enhance performance with fewer physical elements. While these systems offer flexibility and improved spectral efficiency, they typically involve mechanical or reconfigurable parts that introduce additional complexity, power consumption, and response latency. Another approach is the pinching antenna\cite{Pinching_Antenna,Pinching_Antenna2}, which uses small dielectric particles inserted into waveguides to create line-of-sight (LoS) paths for users, though it may face challenges in scalability and integration. Furthermore, the tri-hybrid MIMO architecture\cite{Tri_Hybrid_MIMO,Tri_Hybrid_MIMO2} incorporates reconfigurable antennas along with both digital and analog precoding to balance performance and hardware complexity, yet it still relies on reconfigurable elements that add to cost and control overhead.

Recently, ray antenna array (RAA)\cite{RAA_origin} was proposed, which offers a distinctly different and cost-effective solution. RAA comprises multiple simple uniform linear arrays (sULAs), where all antenna elements within each sULA are directly connected without any phase shifters\cite{RAAjournal_dong,RAAzhu,RAA_ISAC}. Each sULA naturally forms a beam whose main lobe direction aligns with its physical orientation. This design eliminates phase shifters, drastically reduces the number of required RF chains through a selection network, and achieves uniform angular resolution independent of the signal angle of arrival (AoA). This is a distinct advantage over conventional linear or planar arrays, whose resolution deteriorates at larger angles. However, the basic RAA proposed in \cite{RAA_origin} is inherently a two-dimensional (2D) structure designed for discriminating only the azimuth or elevation angle. For practical low-altitude UAV swarm ISAC, where targets are distributed in 3D space, the ability to jointly discriminate both azimuth and elevation angles with high resolution is essential. Besides, a practical deployment challenge for the RAA is the potential blockage or mutual coupling between adjacent sULAs in a planar configuration. This can lead to signal attenuation and inter‑ray interference, ultimately degrading system performance. The author in \cite{cylinderRAA} proposed cylinder directly-connected antenna array (DCAA) to address the blockage problem by utilizing multiple simple uniform circular arrays (sUCAs) in a layered 3D configuration, but the 3D coverage issue remains unsolved.

Motivated by the cost and resolution benefits of the RAA principle and to overcome its dimensional and blockage limitation, this paper proposes a novel spherical DCAA for 3D low-altitude UAV swarm ISAC. The spherical DCAA generalizes the RAA concept from a planar to a spherical configuration. It consists of multiple simple uniform planar arrays (sUPAs) distributed over a spherical surface. Crucially, all elements within each individual sUPA are directly connected, requiring no phase shifters within the sUPA. Only a few number of RF chains are shared by all sUPAs via a selection network. This architecture inherits and extends the key advantages of RAA\cite{RAAjournal_dong}: 1) drastically reduced hardware cost and complexity by replacing a massive number of phase shifters with low-cost antenna elements and RF switches; 2) enhanced energy focusing capability, since antenna elements with higher directivity can be used as each sUPA is responsible for a dedicated angular sector; and 3) superior and uniform angular resolution. Beyond these inherited benefits, the spherical DCAA offers two key distinctive advantages. First, it achieves full 3D angular coverage by enabling joint discrimination of both azimuth and elevation angles. Second, the spherical arrangement of sUPAs effectively mitigates mutual blockage and coupling between adjacent sUPAs, thereby ensuring more reliable signal reception across the entire spatial domain.

The main contributions of this paper are summarized as follows:
\begin{itemize}
    \item First, we propose the novel spherical DCAA architecture for 3D low-altitude UAV swarm ISAC. 
    It is composed of multiple sUPAs distributed over a spherical surface, where all the $M \times M$ antenna elements within each sUPA are directly connected without requiring any phase shifter. We establish a complete mathematical signal model for a bistatic ISAC system utilizing this architecture. 
    \item Second, we conduct a rigorous beam pattern analysis for the spherical DCAA and provide a systematic design methodology. Through mathematical derivation, we formally characterize the unique angular resolution properties of spherical DCAA. 
    Based on this analysis, we propose a systematic spherical arrangement of the sUPAs. 
    Furthermore, we address practical implementation constraints by deriving the minimum radius of the spherical structure required to avoid physical collisions between adjacent sUPAs.
    \item Third, we develop an effective sensing parameter estimation algorithm for ISAC within the spherical DCAA framework. A super-resolution multiple signal classification (MUSIC) algorithm is designed to tailor to the structure of the spherical DCAA's equivalent steering vector for jointly estimating the azimuth and elevation angles of arrival.
    \item Finally, we provide extensive simulation results to validate the theoretical analysis and demonstrate the superior performance of the proposed spherical DCAA system under low-altitude UAV swarm scenarios. The performance of the spherical DCAA is benchmarked against conventional UPA with the same array gain that uses the classical Kronecker product codebook (KPC)-based hybrid beamforming. The results show that the spherical DCAA achieves considerable superior angular resolution and significantly better performance in discriminating closely-spaced UAV targets and achieves angle estimation with lower root mean square error (RMSE). Furthermore, the spherical DCAA exhibits enhanced spectral efficiency for communication, thanks to its higher beamforming gain since more directive antenna elements can be used. 
    
\end{itemize}

The remainder of this paper is organized as follows. Section \ref{sec:system model} presents the system model for the spherical DCAA-based bistatic UAV swarm ISAC. Section \ref{sec:DCAA Architecture} introduced the design of spherical DCAA architecture. Section \ref{sec:DCAAvsUPA} details the conventional UPA architecture and compares its beam pattern with spherical DCAA. Section \ref{sec:algorithm} introduces the ISAC signal model and the proposed joint parameter estimation algorithm. Simulation results and discussions are provided in Section \ref{sec:simulation}. Finally, the paper is concluded in Section \ref{sec:conclusion}.

\textit{Notation}: Scalars are denoted by italic letters. Vectors and matrices are denoted by boldface lower-case and upper-case letters, respectively. Transpose and Hermitian transpose are denoted by $(\cdot)^{\mathrm{T}}$ and $(\cdot)^{\mathrm{H}}$, respectively. The absolute value and $l_2$ norm are given by $|\cdot|$  and $\|\cdot\|_2$, respectively. $\mathbb{C}^{m \times n}$ denotes the space of $m \times n$ complex matrices. $\mathrm{card}(\cdot)$ denotes the cardinality of a set. $j = \sqrt{-1}$ denotes the imaginary unit of complex-valued numbers. $\mathbf{1}_N$ is an $N \times 1$ all-ones vector. $\otimes$ denotes the Kronecker product. $\lfloor \cdot \rfloor$ and $\lceil \cdot \rceil$ denote the floor and ceiling functions, respectively. $\delta(\cdot)$ is the Dirac delta function. $\epsilon(\cdot)$ is the Heaviside step function. The distribution of a circularly symmetric complex Gaussian (CSCG) random variable with zero mean and variance $\sigma^2$ is denoted by $\mathcal{CN}(0,\sigma^2)$ and $\sim $ stands for “distributed as”. The set of integers is denoted by $\mathbb{Z}$.

\section{System Model}\label{sec:system model}
\begin{figure*}[htbp]
        \centering \includegraphics[width=0.7\linewidth]{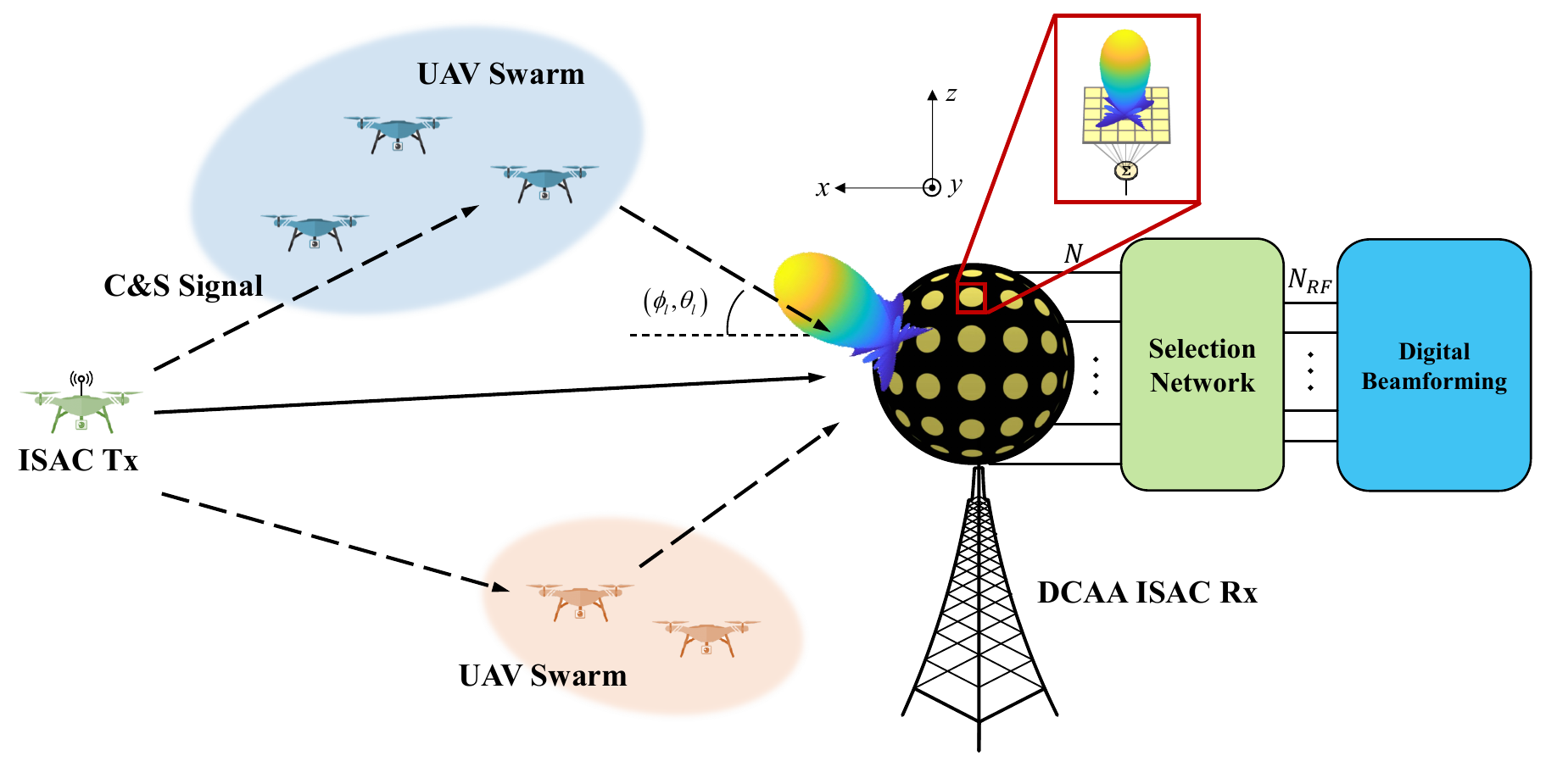}
        \caption{\label{fig:DCAA_env}An illustration of spherical DCAA-based bi-static ISAC for low-altitude UAV swarm.}
\end{figure*}
We consider a low-altitude UAV swarm bi-static ISAC system where the ISAC transmitter has a single antenna and the ISAC receiver (Rx) is equipped with spherical DCAA, as shown in Fig. \ref{fig:DCAA_env}. The system aims to provide communication service for the Tx, while sensing the UAV swarm targets via its transmitted signal. The channel between the Tx and Rx may contain a line-of-sight (LoS) link along with $L$ non-line-of-sight (NLoS) components resulting from reflections by the sensing targets. The $l$th path (where $l = 0$ corresponds to the LoS) is characterized by a set of parameters ${\phi_l, \theta_l}$, denoting the azimuth and elevation AoA, respectively. The ISAC system simultaneously pursues three key objectives: (1) localization of the ISAC Tx UAV, (2) uplink communication, and (3) bi-static sensing of UAV swarm targets using reflected signals. 

In order to reduce hardware cost, the novel spherical DCAA architecture is applied at the Rx, which contains $N$ sUPAs sticking on the surface of a sphere as shown in Fig. \ref{fig:DCAA_env}. Each sUPA has $M\times M$ antenna elements separated by a distance $d=\lambda/2$ along both vertical and horizontal axes, with $\lambda$ being the signal wavelength. Different from the conventional UPA, all the $M\times M$ antenna elements of each sUPA are directly connected without any analog or digital beamforming, which justifies the term sUPA. In addition, with $N_{\mathrm{RF}}\ll N$ RF chains, a selection network is designed to select $N_{\mathrm{RF}}$ ports from the $N$ equivalent sUPA outputs.  The selection network is denoted by $\mathbf{S} =\{ 0,1\}^{N_{\mathrm{RF}}\times N}$, which satisfies $\| [\mathbf{S}]_{i,:} \|=1$ and $[\mathbf{S}]_{:,j}\le 1$, $1\le i \le N_{\mathrm{RF}}$. After that, the $N_{\mathrm{RF}}$ selected outputs will be connected to the RF-chains for subsequent baseband digital processing for communication data decoding and sensing parameter estimation.

Let $x_s$ denote the ISAC signal emitted from the transmitter with transmitting power $P_t=\mathbb{E}\big[|x_s|^2\big]$. The corresponding received signal of $N$ sUPAs at the Rx, denoted by $\mathbf{\tilde{y}}\in\mathbb{C}^{N\times 1}$, is then given by
\begin{equation}\label{eq:rx antenna 1}
\begin{aligned}
    \mathbf{\tilde{y}}&= \mathbf{h}x_s+\mathbf{z},\\
\end{aligned}
\end{equation}
where $\mathbf{z}\in\mathbb{C}^{N\times 1}$ is the AWGN vector, $\mathbf{h}\in\mathbb{C}^{N\times 1}$ is the equivalent channel between the Tx and the $N$ sUPAs, which can be written as
\begin{equation}
    \mathbf{h}=\sum_{l=0}^{L}\mathbf{h}_l,
\end{equation}
where $\mathbf{h}_l$ denotes the channel of the $l$th path.
After passing through the selection network $\mathbf{S}$, the resulting signal $\mathbf{y}\in \mathbb{C}^{N_{\mathrm{RF}} \times 1}$ can be expressed as
\begin{equation}\label{eq:rx antenna 2}
\begin{aligned}
    \mathbf{y}=\mathbf{S} \mathbf{\tilde{y}}=\mathbf{S} \mathbf{h}x_s+\mathbf{S}\mathbf{z}.\\
\end{aligned}
\end{equation}

In the following, we will introduce the specific design of spherical DCAA, selection network, and derive the equivalent channel vector $\mathbf{h}$.

\section{Spherical DCAA Architecture}\label{sec:DCAA Architecture}
\begin{figure}[!h] 
\centering
 \subfloat[\label{fig:planar array}]{
        \centering \includegraphics[width=0.45\columnwidth]{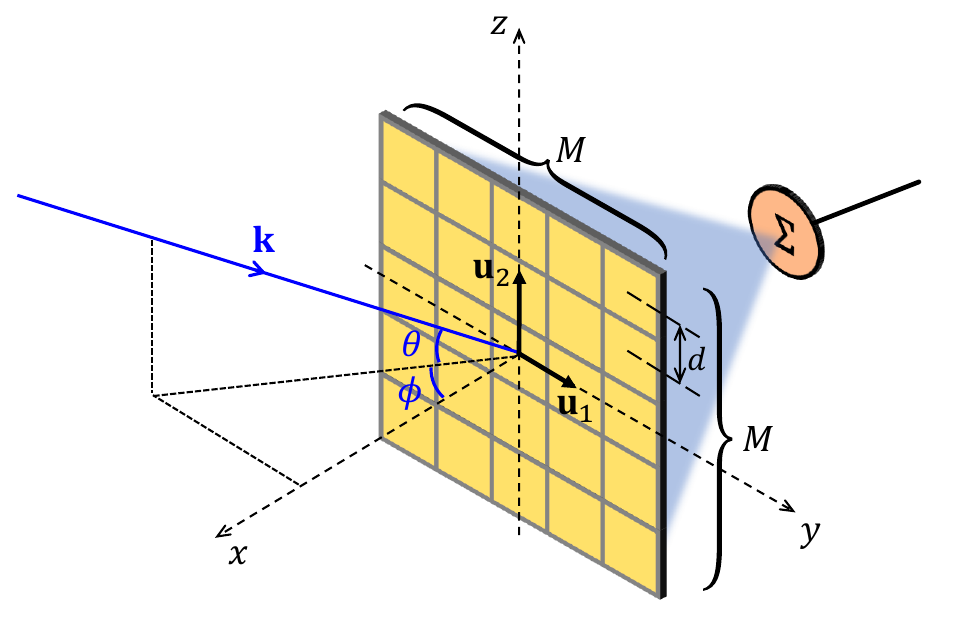}
        }
 \subfloat[\label{fig:planar array after rotation}]{        
        \centering \includegraphics[width=0.45\columnwidth]{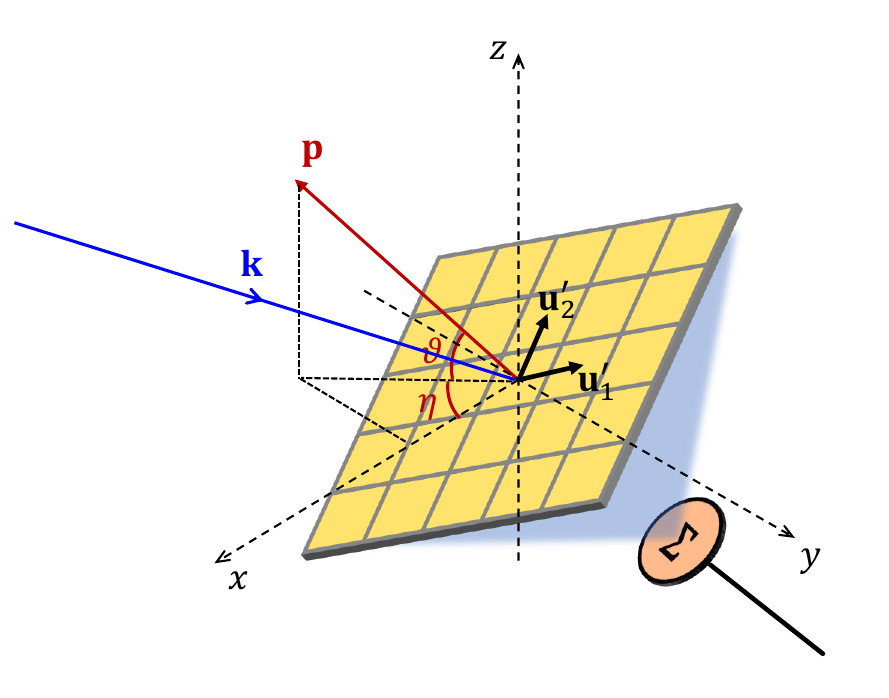}
        }
        \caption{(a) Illustration of sUPA in the proposed spherical DCAA, where all antenna elements of sUPA are separated by half wavelength and directly connected; (b) Rotate the sUPA with respect to $y$-axis and $z$-axis without self-spinning.}
\end{figure}

\subsection{Array Response of Spherical DCAA}
As shown in Fig. \ref{fig:planar array}, consider a $M\times M$ UPA located in the $yz$-plane. Under the assumption of far-field uniform planar wave (UPW), denote the wave vector, azimuth and elevation AoA of the incoming signal as $\mathbf{k}\in\mathbb{C}^{3\times1}$ and $(\phi,\theta)$ respectively, where $\mathbf{k}=[-\cos\theta\cos\phi,-\cos\theta\sin\phi,-\sin\theta]^\mathrm{T}$, $-\pi/2\le\phi,\theta\le\pi/2$. In the figure $\phi$ is measured anti-clockwise from the positive $x$-axis, and $\theta$ is defined relative to the $xy$ plane, taking positive values for directions above the plane and negative values for directions below it.
Let $\mathbf{u}_1=[0,1,0]^\mathrm{T},\mathbf{u}_2=[0,0,1]^\mathrm{T}$ be the normalized vector aligned with the first and second dimension of the UPA. 

To achieve maximum response for signals from multiple directions, the sUPA is rotated, as shown in Fig. \ref{fig:planar array after rotation}. Suppose the rotation of the sUPA does not include any self-spinning, the rotation matrix $\mathbf{R}(\eta,\vartheta )\in\mathbb{C}^{3\times3}$ can be expressed as 
\begin{equation}
    \mathbf{R}(\eta,\vartheta )=\underbrace{\begin{bmatrix}
\cos\eta  & -\sin\eta & 0\\
\sin\eta  & \cos\eta & 0\\
0  & 0 & 1
\end{bmatrix}}_{\mathbf{R}_z(\eta)} \underbrace{\begin{bmatrix}
\cos\vartheta  & 0 & -\sin\vartheta\\
0  & 1 & 0\\
\sin\vartheta  & 0 & \cos\vartheta
\end{bmatrix}}_{\mathbf{R}_y(\vartheta)}, 
\end{equation}
where $\mathbf{R}_z(\eta)\in \mathbb{R}^{3\times3}$ is the yaw matrix rotating around the $z$-axis, with $\eta \in [-\eta_{\max},\eta_{\max}]$ being the yaw angle w.r.t. the positive $y$-axis (anticlockwise on the $xy$-plane); and $\mathbf{R}_y(\vartheta)\in \mathbb{R}^{3\times3}$ is the pitch matrix rotating around the $y$-axis, with $\vartheta\in [-\vartheta_{\max},\vartheta_{\max}]$ being the pitch angle w.r.t. the positive $z$-axis (anticlockwise on the $xz$-plane), with $\vartheta_{\max} < \pi/2$. Let the normal vector of the plane containing the antenna array be denoted by $\mathbf{p}=[-\cos\vartheta\cos\eta,-\cos\vartheta\sin\eta,-\sin\vartheta]^\mathrm{T}\in\mathbb{C}^{3\times 1}$. Thus, we term the rotated sUPA as the sUPA with orientation $(\eta,\vartheta)$, and the normalized side vector of the rotated sUPA can be expressed as 
\begin{equation}
\begin{aligned}
     &\mathbf{u}_1'=\mathbf{R}(\eta,\vartheta )\mathbf{u}_1=\begin{bmatrix}
 -\sin\eta,\cos\eta,0
\end{bmatrix}^T,\\
  & \mathbf{u}_2'=\mathbf{R}(\eta,\vartheta )\mathbf{u}_2=\begin{bmatrix}
 -\cos\eta\sin\vartheta , -\sin\eta\sin\vartheta ,\cos\vartheta
\end{bmatrix}^\mathrm{T}.
\end{aligned}
\end{equation}

The array response vector of the rotated UPA with incoming wave vector $\mathbf{k}$ can be obtained
\begin{equation}\label{eq:sUPA steer vector}
    \mathbf{a}(\mathbf{u}_1',\mathbf{u}_2',\mathbf{k})=\sqrt{G(\mathbf{u}_1',\mathbf{u}_2',\mathbf{k})} \mathbf{a}_1(\mathbf{u}_1',\mathbf{k})\otimes \mathbf{a}_2(\mathbf{u}_2',\mathbf{k}),
\end{equation}
where $G(\mathbf{u}_1',\mathbf{u}_2',\mathbf{k})$ accounts the radiation pattern of each antenna element in the sUPA, which can be equivalently written as $G(\eta-\phi,\vartheta-\theta) $; and $\mathbf{a}_1(\mathbf{u}_1',\mathbf{k})$ and $\mathbf{a}_2(\mathbf{u}_2',\mathbf{k})$ are the steering vector along the side vectors, respectively, given by
\begin{equation}
\begin{aligned}
\mathbf{a}_1(\mathbf{u}_1',\mathbf{k})=\Big[1,e^{-j\pi\mathbf{u}_1^\mathrm{'T}\mathbf{k}},...,e^{-j(M-1)\pi\mathbf{u}_1^\mathrm{'T}\mathbf{k}}\Big]^\mathrm{T},\\
\mathbf{a}_2(\mathbf{u}_2',\mathbf{k})=\Big[1,e^{-j\pi\mathbf{u}_2^\mathrm{'T}\mathbf{k}},...,e^{-j(M-1)\pi\mathbf{u}_2^\mathrm{'T}\mathbf{k}}\Big]^\mathrm{T}.
\end{aligned}
\end{equation}

For sUPA, all the $M\times M$ antenna elements are directly connected. Then the resulting response of the sUPA with orientation $(\mathbf{u}_1', \mathbf{u}_2')$ for incoming signal $\mathbf{k}$, denoted by $r(\mathbf{u}_1',\mathbf{u}_2',\mathbf{k})$, is given by 
\begin{equation}\label{eq:sUPA response with rotate}
\begin{aligned}
&r(\mathbf{u}_1',\mathbf{u}_2',\mathbf{k})=\mathbf{1}^\mathrm{T}_{M^2\times1}\mathbf{a}(\mathbf{u}_1',\mathbf{u}_2',\mathbf{k})\\
&=\sqrt{G(\mathbf{u}_1',\mathbf{u}_2',\mathbf{k})} \underbrace{M^2H_M\big(-\mathbf{u}_1^\mathrm{'T}\mathbf{k}\big)H_M\big(-\mathbf{u}_2^\mathrm{'T}\mathbf{k}\big)}_{f(\mathbf{u}_1',\mathbf{u}_2',\mathbf{k})},
\end{aligned}
\end{equation}
where $H_M(x)=\frac{1}{M}\sum_{m=0}^{M-1}e^{j\pi mx}$ is the Dirichlet kernel
function, given by
\begin{equation}\label{eq:Dirichlet kernel}
H_M(x)=e^{j\frac{\pi}{2}(M-1)x}\frac{\sin(\frac{\pi}{2}Mx)}{M\sin(\frac{\pi}{2}x)}.
\end{equation}



\begin{figure}[!h]
  \centering
  \subfloat[\label{fig:rotate nulls}]{
    \includegraphics[width=0.45\linewidth]{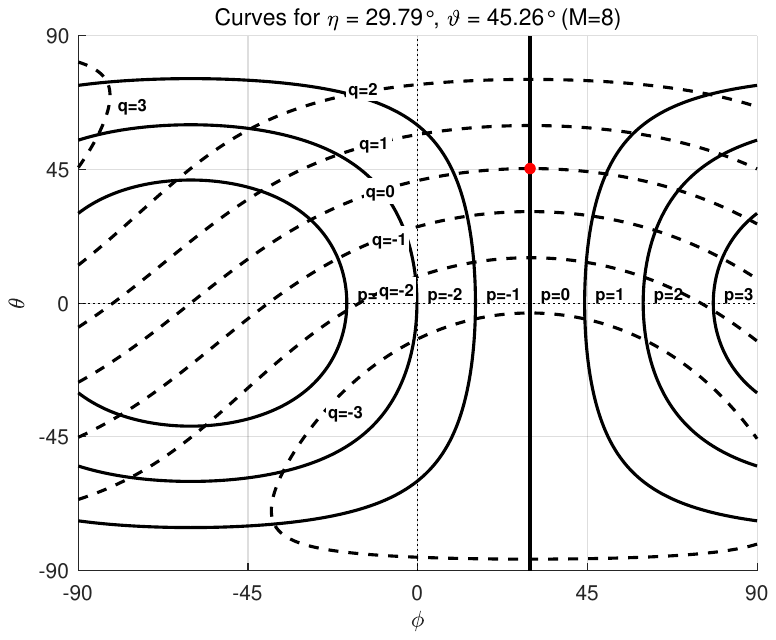}
  }
  \subfloat[\label{fig:rotate beampattern}]
  {
    \includegraphics[width=0.45\linewidth]{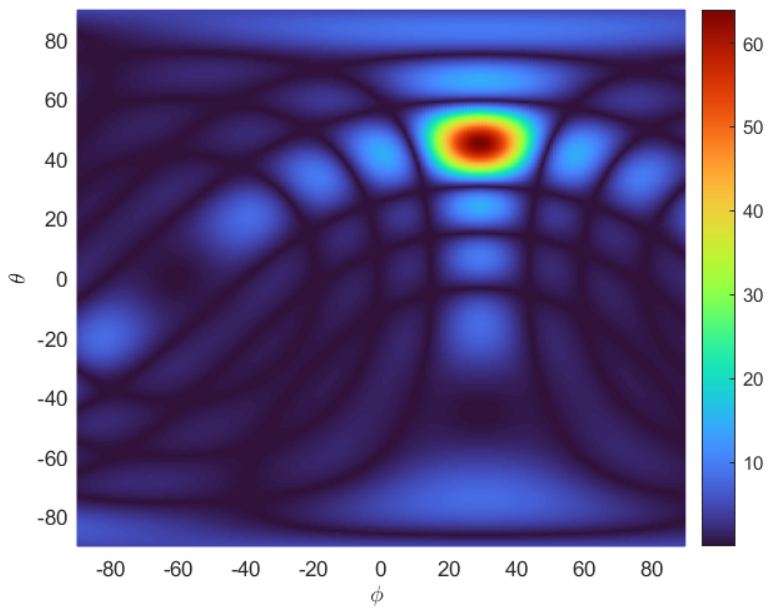}
  }\\
  \caption{Null curves and beam pattern of the sUPA when $M=8$ and $(\eta,\vartheta) = (29^\circ,45^\circ)$. The null curves are plotted through solid lines and dashed lines representing different equations in \eqref{eq:null 1} }
  \label{fig:DCAA nulls}
\end{figure}


Equation \eqref{eq:sUPA response with rotate} can also be written as 
\begin{equation}
\begin{aligned}
&r(\phi,\theta;\eta,\vartheta)\\
&=\sqrt{G( \eta-\phi,\vartheta-\theta )} M^2H_M\big(\cos\theta\sin(\phi-\eta)\big)\\&\quad\times H_M\big(\sin\theta\cos\vartheta-\cos\theta\sin\vartheta\cos(\phi-\eta)\big)\\
&\triangleq\sqrt{G( \eta-\phi,\vartheta-\theta )}f(\phi,\theta;\eta,\vartheta),
\end{aligned}
\end{equation}
where $G(\mathbf{u}_1',\mathbf{u}_2',\mathbf{k})\triangleq G( \eta-\phi,\vartheta-\theta )$. Based on this, we have the following lemmas.

\begin{lemma}\label{lemma:rotate_peak}
For a sUPA with orientation $\mathbf{u}_1',\mathbf{u}_2'$ and incoming wave vector $\mathbf{k}$, we have $\max\limits_{\mathbf{k}}  |f(\mathbf{u}_1',\mathbf{u}_2',\mathbf{k})|=M^2$, and the maximum is achieved if and only if $\mathbf{R}^\mathrm{T}(\eta,\vartheta )\mathbf{k}=[1,0,0]^\mathrm{T}$ or equivalently $(\phi, \theta)=(\eta,\vartheta )$
\end{lemma}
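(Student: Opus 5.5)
The plan is to bound each Dirichlet factor separately, using $|H_M(x)|\le 1$, and then use the geometry of the rotated side vectors to locate the points where both bounds are attained. Write $x_1=-\mathbf{u}_1'^{\mathrm T}\mathbf{k}$ and $x_2=-\mathbf{u}_2'^{\mathrm T}\mathbf{k}$, so that $|f(\mathbf{u}_1',\mathbf{u}_2',\mathbf{k})|=M^2|H_M(x_1)||H_M(x_2)|$.

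\textbf{Step 1: the upper bound.} From the definition $H_M(x)=\frac{1}{M}\sum_{m=0}^{M-1}e^{j\pi m x}$, the triangle inequality gives $|H_M(x)|\le 1$. Hence $|f|\le M^2$ for every $\mathbf{k}$.

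\textbf{Step 2: when a factor equals one.} Equality $|H_M(x)|=1$ holds iff all phases $e^{j\pi m x}$, $m=0,\dots,M-1$, coincide. For $M\ge 2$ this means $e^{j\pi x}=1$, i.e.\ $x\in 2\mathbb{Z}$; this is also visible from the closed form \eqref{eq:Dirichlet kernel}.

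\textbf{Step 3: geometry of the arguments.} Since $\mathbf{R}(\eta,\vartheta)$ is orthogonal, $\mathbf{u}_1'$ and $\mathbf{u}_2'$ are orthonormal, and $\|\mathbf{k}\|_2=1$. Bessel's inequality then gives $x_1^2+x_2^2\le 1$, so $x_1,x_2\in[-1,1]$. The only element of $2\mathbb{Z}$ in this interval is $0$.

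Therefore $|f|=M^2$ iff $\mathbf{u}_1'^{\mathrm T}\mathbf{k}=\mathbf{u}_2'^{\mathrm T}\mathbf{k}=0$. This means $\mathbf{k}$ is a unit vector along the third column of the orthonormal frame, i.e.\ $\mathbf{R}^{\mathrm T}(\eta,\vartheta)\mathbf{k}=\pm[1,0,0]^{\mathrm T}$, equivalently $\mathbf{k}=\pm\mathbf{p}$. The maximum $M^2$ is attained, for instance by $\mathbf{k}=\mathbf{p}$, so $\max_{\mathbf{k}}|f|=M^2$.

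\textbf{Step 4: removing the sign ambiguity.} This is the step I expect to be the only delicate one: the sign and orientation conventions. I would compare $\mathbf{k}=[-\cos\theta\cos\phi,-\cos\theta\sin\phi,-\sin\theta]^{\mathrm T}$ with $\mathbf{p}=[-\cos\vartheta\cos\eta,-\cos\vartheta\sin\eta,-\sin\vartheta]^{\mathrm T}$.

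\emph{Front solution $\mathbf{k}=\mathbf{p}$.} Matching the third components gives $\sin\theta=\sin\vartheta$. With $\theta,\vartheta\in(-\pi/2,\pi/2)$ this yields $\theta=\vartheta$. Then $\cos\theta=\cos\vartheta>0$, and matching the first two components gives $(\cos\phi,\sin\phi)=(\cos\eta,\sin\eta)$, so $\phi=\eta$.

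\emph{Back solution $\mathbf{k}=-\mathbf{p}$.} This would require $\theta=-\vartheta$ and $\phi=\eta+\pi$ (mod $2\pi$). That lies outside the admissible range $-\pi/2\le\phi\le\pi/2$ whenever $|\eta|<\pi/2$. The same argument shows that the admissible solution is the one identified with $\mathbf{R}^{\mathrm T}\mathbf{k}=[1,0,0]^{\mathrm T}$ in the statement, with the sign of the first frame axis fixed by the paper's convention.

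Finally, for the direct check that $(\phi,\theta)=(\eta,\vartheta)$ attains the maximum, substitute into the explicit form of $f(\phi,\theta;\eta,\vartheta)$. The first argument is $\cos\theta\sin(\phi-\eta)=0$. The second is $\sin\vartheta\cos\vartheta-\cos\vartheta\sin\vartheta=0$. Both Dirichlet factors then equal $1$, so $|f|=M^2$.
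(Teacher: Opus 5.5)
Your argument is correct and follows the same route as the paper's proof: both Dirichlet arguments must vanish, so $\mathbf{k}$ must lie along the array normal. It is more complete, though, since the paper's one-line proof never justifies $|H_M(x)|\le 1$, never excludes the other peaks $x\in 2\mathbb{Z}\setminus\{0\}$ (your orthonormality/Bessel step), and drops the $\pm$ ambiguity that you resolve through the angle ranges.

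One correction to Step~4: with the paper's $\mathbf{R}$, one has $\mathbf{R}(\eta,\vartheta)[1,0,0]^{\mathrm{T}}=\mathbf{u}_1'\times\mathbf{u}_2'=[\cos\eta\cos\vartheta,\sin\eta\cos\vartheta,\sin\vartheta]^{\mathrm{T}}=-\mathbf{p}$. So the admissible maximizer $\mathbf{k}=\mathbf{p}$, i.e.\ $(\phi,\theta)=(\eta,\vartheta)$, satisfies $\mathbf{R}^{\mathrm{T}}\mathbf{k}=[-1,0,0]^{\mathrm{T}}$, and the statement's condition $\mathbf{R}^{\mathrm{T}}\mathbf{k}=[1,0,0]^{\mathrm{T}}$ is exactly your excluded back solution $(\eta+\pi,-\vartheta)$. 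Instead of appealing to ``the paper's convention,'' say plainly that the statement, and the paper's proof (which writes $\mathbf{k}=\mathbf{u}_1'\times\mathbf{u}_2'$), contain a sign slip, and that your $\pm$ result with the range restriction is the correct version.
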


\begin{IEEEproof}
According to equation \eqref{eq:sUPA response with rotate} and \eqref{eq:Dirichlet kernel}, $|f(\mathbf{u}_1',\mathbf{u}_2',\mathbf{k})|$ achieves its maximum only if 
$\mathbf{u}_1^\mathrm{'T}\mathbf{k}=\mathbf{u}_2^\mathrm{'T}\mathbf{k}=0$ and $\mathbf{k}=\mathbf{u}_1'\times\mathbf{u}_2'=\mathbf{R}(\eta,\vartheta )[1,0,0]^\mathrm{T}$.
\end{IEEEproof}


\begin{lemma}\label{lemma:sUPA null with rotate}
For a given sUPA with $\mathbf{u}_1'$ and $\mathbf{u}_2'$ and incoming wave vector $\mathbf{k}$, we have $\min\limits_{\mathbf{k}}  |f(\mathbf{u}_1',\mathbf{u}_2',\mathbf{k})|=0$, and the minimum is achieved if and only if $\mathbf{u}_1^\mathrm{T}\mathbf{R}^\mathrm{T}(\eta,\vartheta )\mathbf{k}=\frac{2p}{M}$ or $\mathbf{u}_2^\mathrm{T}\mathbf{R}^\mathrm{T}(\eta,\vartheta )\mathbf{k}=\frac{2q}{M}$, where $p,q\in\mathbb{Z} \setminus  \left \{ 0 \right \}$ and $|p|\le\frac{M}{2},|q|\le M$.
\end{lemma}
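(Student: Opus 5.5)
The plan is to reduce the statement to the zeros of a single Dirichlet kernel, using the product form in \eqref{eq:sUPA response with rotate}.

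\textbf{Step 1: rewrite the arguments.} We have
\[
f=M^2H_M(-\mathbf{u}_1'^{\mathrm{T}}\mathbf{k})\,H_M(-\mathbf{u}_2'^{\mathrm{T}}\mathbf{k}).
\]
Since $\mathbf{u}_i'=\mathbf{R}(\eta,\vartheta)\mathbf{u}_i$, it follows that $\mathbf{u}_i'^{\mathrm{T}}\mathbf{k}=\mathbf{u}_i^{\mathrm{T}}\mathbf{R}^{\mathrm{T}}(\eta,\vartheta)\mathbf{k}$. Set $x_i=\mathbf{u}_i^{\mathrm{T}}\mathbf{R}^{\mathrm{T}}\mathbf{k}$. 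Because $\mathbf{R}$ is orthogonal and $\|\mathbf{k}\|_2=1$, the vector $\mathbf{R}^{\mathrm{T}}\mathbf{k}$ is a unit vector. Hence $x_1,x_2\in[-1,1]$ with $x_1^2+x_2^2\le 1$. Since $|f|\ge 0$, it suffices to show two things: the value $0$ is attained, and $|f|=0$ holds exactly on the stated set. A product vanishes iff one of its factors vanishes, so $|f|=0$ iff $H_M(-x_1)=0$ or $H_M(-x_2)=0$. By symmetry of $|H_M|$, this is the same as $H_M(x_1)=0$ or $H_M(x_2)=0$.

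\textbf{Step 2: zeros of $H_M$ on $[-1,1]$.} From \eqref{eq:Dirichlet kernel}, $|H_M(x)|=\bigl|\sin(\tfrac{\pi}{2}Mx)\bigr|/\bigl(M|\sin(\tfrac{\pi}{2}x)|\bigr)$ wherever $\sin(\tfrac{\pi}{2}x)\neq0$. On $[-1,1]$ the denominator vanishes only at $x=0$. There the sum definition gives $H_M(0)=1\neq0$, so $x=0$ is excluded; this is why $p,q\neq0$. Elsewhere, $H_M(x)=0$ iff $\sin(\tfrac{\pi}{2}Mx)=0$, i.e.\ iff $x=2p/M$ with $p\in\mathbb{Z}$. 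The constraint $|x|\le1$ then bounds the admissible integers, $|p|\le M/2$. To be careful at the endpoints $x=\pm1$: the numerator $\sin(\pm\tfrac{\pi}{2}M)$ vanishes iff $M$ is even, and in that case $p=\pm M/2$ is consistent.

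\textbf{Step 3: attainability and the index ranges.} I need to show each such value is actually reached by some physical $\mathbf{k}$. For $x_1=2p/M$, I would choose $\mathbf{R}^{\mathrm{T}}\mathbf{k}=[\sqrt{1-x_1^2},x_1,0]^{\mathrm{T}}$ and set $\mathbf{k}=\mathbf{R}\,[\sqrt{1-x_1^2},x_1,0]^{\mathrm{T}}$. The analogous choice with the third coordinate handles $x_2$. This shows the minimum value $0$ is attained.

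The main obstacle is reconciling these ranges with the paper's bound $|q|\le M$. A literal derivation gives $|q|\le M/2$ for the second factor too, since $|x_2|\le1$. I would therefore check whether the stated range is merely a loose (but harmless) superset, or whether it reflects the AoA parameterization $-\pi/2\le\phi,\theta\le\pi/2$ and the pitch $\vartheta$. Under that parameterization the expression $\sin\theta\cos\vartheta-\cos\theta\sin\vartheta\cos(\phi-\eta)$ stays in $[-1,1]$. So I would state that any $q$ with $|q|>M/2$ contributes no solutions, and the "if and only if" holds with the tight range contained in the stated one. Finally, I would note that the "if" direction is immediate from Step 2 and the "only if" direction follows from the product factorization.
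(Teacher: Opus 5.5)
Your proposal is correct and follows the same route as the paper. The paper's proof is a one-line reading of the zero set from the product $H_M(-\mathbf{u}_1'^{\mathrm{T}}\mathbf{k})\,H_M(-\mathbf{u}_2'^{\mathrm{T}}\mathbf{k})$; you fill in what it skips, namely the bound $|x_i|\le 1$, the exclusion of $x=0$, and attainability (which tacitly needs $M\ge 2$). You are also right that only $|q|\le M/2$ can occur, so the stated range $|q|\le M$ just adds vacuous equations and the ``if and only if'' still holds.
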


\begin{IEEEproof}
According to equation \eqref{eq:sUPA response with rotate} and \eqref{eq:Dirichlet kernel}, $|f(\mathbf{u}_1',\mathbf{u}_2',\mathbf{k})|$ achieves its minimum if and only if 
$\mathbf{u}_1^\mathrm{'T}\mathbf{k}=\mathbf{u}_1^\mathrm{T}\mathbf{R}^\mathrm{T}(\eta,\vartheta )\mathbf{k}=\frac{2p}{M}$ or $\mathbf{u}_2^\mathrm{'T}\mathbf{k}=\mathbf{u}_2^\mathrm{T}\mathbf{R}^\mathrm{T}(\eta,\vartheta )\mathbf{k}=\frac{2q}{M}$, where $p,q\in\mathbb{Z} \setminus  \left \{ 0 \right \}$ and $|p|\le\frac{M}{2},|q|\le M$.
\end{IEEEproof}

Let \(\theta_{\mathrm{null}}\) and \(\phi_{\mathrm{null}}\) denote the elevation and azimuth angles when the null condition \(|f(\mathbf{u}_1',\mathbf{u}_2',\mathbf{k})| = 0\) holds. According to Lemma \ref{lemma:sUPA null with rotate}, it yields
\begin{equation}\label{eq:null 1}
\begin{aligned}
    &\cos\theta_{\mathrm{null}}\sin(\phi_{\mathrm{null}}-\eta)=\frac{2p}{M},\\
\text{or } 
    &\sin\theta_{\mathrm{null}}\cos\vartheta-\cos\theta_{\mathrm{null}}\sin\vartheta\cos(\phi_{\mathrm{null}}-\eta)=\frac{2q}{M},
    \end{aligned}
\end{equation}
where $p,q\in\mathbb{Z} \setminus  \left \{ 0 \right \}$ and $|p|\le\frac{M}{2},|q|\le M$.

The corresponding null curves and beam pattern are shown in Fig. \ref{fig:rotate nulls} and \ref{fig:rotate beampattern}. While a single sUPA produces a beam pattern with a continuous network of nulls that creates a web of blind areas, Lemma \ref{lemma:rotate_peak} indicates that this property can be harnessed as an advantage. By confining the maximum response to a specific direction, it simultaneously enhances communication and sensing through energy focusing and interference suppression. The null-to-null beamwidth of the spherical DCAA main lobe can be obtained, which is given in Lemma \ref{lemma:sUPA beamwidth}.

\begin{lemma}\label{lemma:sUPA beamwidth}
For any given sUPA with orientation $(\eta,\vartheta)$, the null-to-null azimuth and elevation beamwidth of the main lobe, denoted by $\textbf{BW}(\vartheta,\eta)=(\Delta\phi,\Delta\theta)$, is expressed as
\begin{equation}
    \textbf{BW}(\vartheta,\eta)=\bigg(2\arcsin\Big(\frac{2}{M\cos\vartheta}\Big),2\arcsin\Big(\frac{2}{M}\Big)\bigg).
\end{equation}
\end{lemma}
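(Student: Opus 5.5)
The plan is to read the main-lobe beamwidths along the two principal cuts through the beam peak. By Lemma \ref{lemma:rotate_peak}, the peak of $|f(\phi,\theta;\eta,\vartheta)|$ is at $(\phi,\theta)=(\eta,\vartheta)$. I define the azimuth beamwidth $\Delta\phi$ as the distance between the first nulls met when $\phi$ varies with $\theta=\vartheta$ fixed. I define the elevation beamwidth $\Delta\theta$ as the distance between the first nulls met when $\theta$ varies with $\phi=\eta$ fixed. Lemma \ref{lemma:sUPA null with rotate} and its explicit form \eqref{eq:null 1} say that a null occurs exactly when one of the two Dirichlet factors has argument $2p/M$ or $2q/M$ with $p,q\neq 0$. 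So on each cut the task is to find the smallest nonzero deviation from the peak at which either argument reaches $\pm 2/M$.

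\emph{Elevation cut.} Setting $\phi=\eta$ makes the first argument $\cos\theta\sin(\phi-\eta)$ vanish identically, so $H_M(0)=1$ and that factor never produces a null. The second argument becomes $\sin\theta\cos\vartheta-\cos\theta\sin\vartheta=\sin(\theta-\vartheta)$. This is monotone in $\theta-\vartheta$ near zero, so the first nulls are at $\sin(\theta-\vartheta)=\pm 2/M$, i.e.\ $\theta-\vartheta=\pm\arcsin(2/M)$. Hence $\Delta\theta=2\arcsin(2/M)$.

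\emph{Azimuth cut.} Setting $\theta=\vartheta$, the first argument is $\cos\vartheta\sin(\phi-\eta)$. It equals $\pm 2/M$ at $\phi-\eta=\pm\arcsin\big(2/(M\cos\vartheta)\big)$, which gives the claimed $\Delta\phi=2\arcsin\big(2/(M\cos\vartheta)\big)$. This requires $2/(M\cos\vartheta)\le 1$, a mild condition on $\vartheta_{\max}$ that I will state explicitly.

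The main obstacle is checking that on the azimuth cut the second factor does not produce a null first. With $\Delta:=\phi-\eta$, its argument is $\sin\vartheta\cos\vartheta\,(1-\cos\Delta)$, which is second order in $\Delta$. For $|\Delta|\le\arcsin\big(2/(M\cos\vartheta)\big)$, I will use $1-\cos\Delta\le\sin^2\Delta$ to bound it:
\begin{equation*}
|\sin\vartheta\cos\vartheta|\,(1-\cos\Delta)\le |\sin\vartheta\cos\vartheta|\,\frac{4}{M^2\cos^2\vartheta}=\frac{4|\tan\vartheta|}{M^2}.
\end{equation*}
This is below $2/M$ whenever $|\tan\vartheta|<M/2$, and that already follows from $2/(M\cos\vartheta)<1$ for moderate $M$. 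I would state this as the standing condition on $\vartheta_{\max}$. The second factor is therefore nonzero inside the interval, and the first-factor nulls are indeed the first nulls.
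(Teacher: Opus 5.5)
Your proposal is correct and follows the paper's proof in Appendix~A: take the two principal cuts $\theta=\vartheta$ and $\phi=\eta$, read off the nulls from \eqref{eq:null 1}, and check that on the azimuth cut the first-factor nulls at $\eta\pm\arcsin\big(\frac{2}{M\cos\vartheta}\big)$ come before any second-factor null. The only difference is in that last check: the paper compares explicit null locations via $\arcsin x<\arccos(1-x)<\arccos(1-x/\sin\vartheta)$ with $x=\frac{2}{M\cos\vartheta}$, whereas your bound $1-\cos\Delta\le\sin^2\Delta$ never solves for the second-factor nulls and needs no sign case split on $\vartheta$; also, your condition $|\tan\vartheta|<M/2$ is not an extra assumption, since $|\tan\vartheta|<\sec\vartheta\le M/2$ already follows from $2/(M\cos\vartheta)\le 1$ for every $M$.
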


\begin{IEEEproof}
    Please refer to Appendix \ref{app:sUPA beamwidth}
\end{IEEEproof}

Lemma \ref{lemma:sUPA beamwidth} indicates that the sUPA's main lobe has a uniform beamwidth $2\arcsin(2/M)$ in the elevation domain but a non-uniform one in the azimuth domain, and the latter increases with the elevation angle $\vartheta$. This characteristic ensures a uniform elevation angular resolution for the spherical DCAA and, furthermore, it implies that the sUPA becomes effectively sparser at larger elevation angles.

\subsection{sUPA Orientation Design and Spherical DCAA Implementation}
Under the premise of covering the entire space and minimize interference between adjacent sUPAs in the elevation and azimuth directions, the sUPAs are strategically arranged with uniform spacing of $\arcsin\frac{2}{M\cos\vartheta}$ and $\arcsin\frac{2}{M}$ respectively in the azimuth and elevation directions, and we define $N_\vartheta=\left \lfloor \theta_{\max}/\arcsin\frac{2}{M} \right \rfloor$ and $N_{\eta}(\vartheta)=\left \lfloor \phi_{\max}/\arcsin\frac{2}{M\cos\vartheta} \right \rfloor$ as the number of layers of sUPAs and the number of sUPAs for $\vartheta$, respectively, where $\theta_{\max}$ and $\phi_{\max}$ are the maximum elevation and azimuth angles of the signals. Therefore, with the original sUPA with $(\eta,\vartheta)=(0,0)$, we adopt the following two steps to determine the directions of all the $N$ sUPAs.
\begin{itemize}
    \item \textit{Elevation:} We first find all the sUPA directions $(\eta, \vartheta)$ in the elevation dimension with $\eta=0$, which yields
    \begin{equation}\label{eq:thetaq}
        \vartheta_q=q\arcsin(2/M),q\in\mathcal{N}_{\vartheta},
    \end{equation}
    where $\mathcal{N}_{\vartheta}=\{-N_\vartheta,...,0,1,...,N_\vartheta\}$.
    \item \textit{Azimuth:} For elevation angle $\vartheta_q$, we identify all corresponding azimuth angles $\eta_{p,q}$ in the azimuth domain.
    \begin{equation}\label{eq:etapq}
    \begin{aligned}
         (\eta_{p,q},\vartheta_{q})=\bigg(p\arcsin\Big(\frac{2}{M\cos\vartheta_q}\Big),\vartheta_q\bigg),p\in\mathcal{N}_{\eta}(\vartheta_q),
    \end{aligned}
    \end{equation}
    where $\mathcal{N}_{\eta}(\vartheta_q)=\{-N_{\eta}(\vartheta_q),...,0,1,...,N_{\eta}(\vartheta_q)\}$.
\end{itemize}

The total number $N$ of sUPAs is thus
\begin{equation}\label{eq:number of N}
    N=\sum_{q=-N_\vartheta}^{N_\vartheta}N_{\eta}(\vartheta_q).
\end{equation}





\begin{figure}[t] 
        \centering \includegraphics[width=0.6\columnwidth]{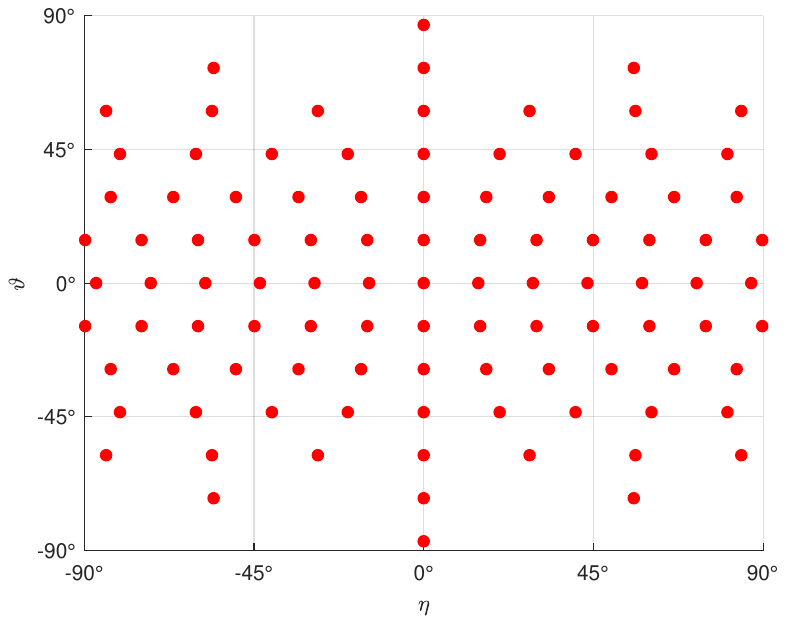}
        \caption{\label{fig:setup1}The orientation of sUPAs $(\eta_{p,q},\vartheta_{q})$ according to equation \eqref{eq:etapq}, where $M=8$.}
\end{figure}


\begin{figure}[t]
  \centering
  \subfloat[\label{fig:beam1}]{
    \includegraphics[width=0.42\linewidth]{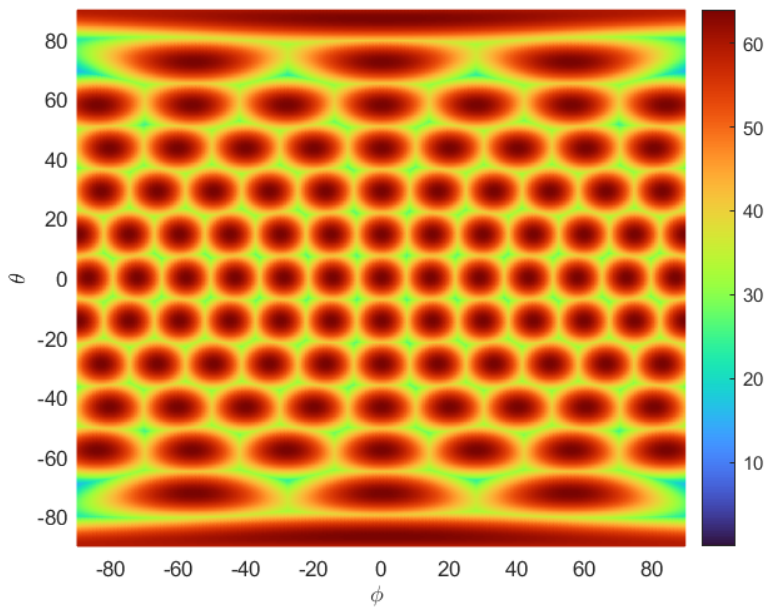}
  }
  \subfloat[\label{fig:KPC_beam}]
  {
    \includegraphics[width=0.42\linewidth]{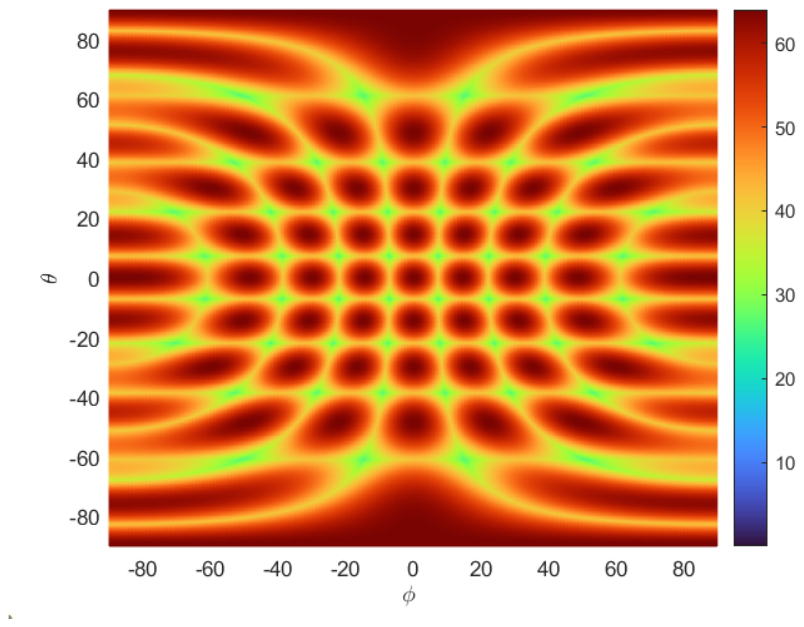}
  }\\
  \subfloat[\label{fig:maxbeam1}]
  {
    \includegraphics[width=0.42\linewidth]{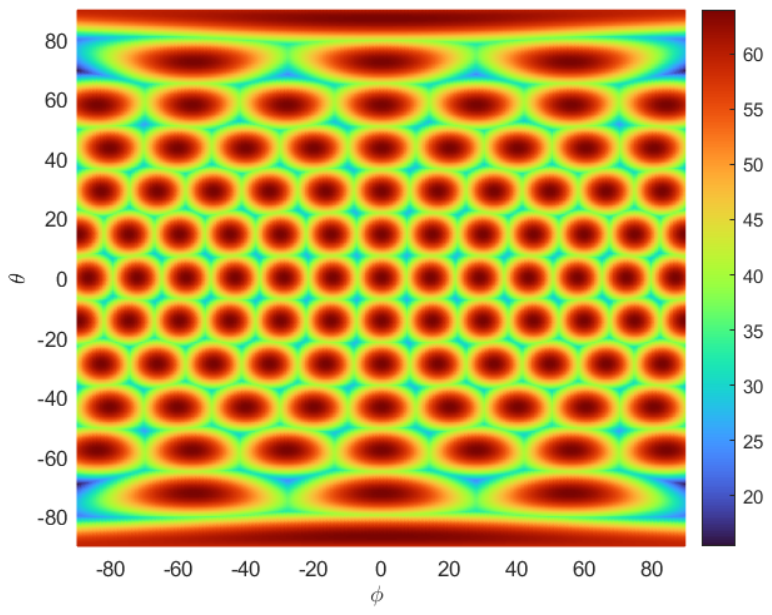}
  }
  \subfloat[\label{fig:KPC_maxbeam}]
  {
    \includegraphics[width=0.42\linewidth]{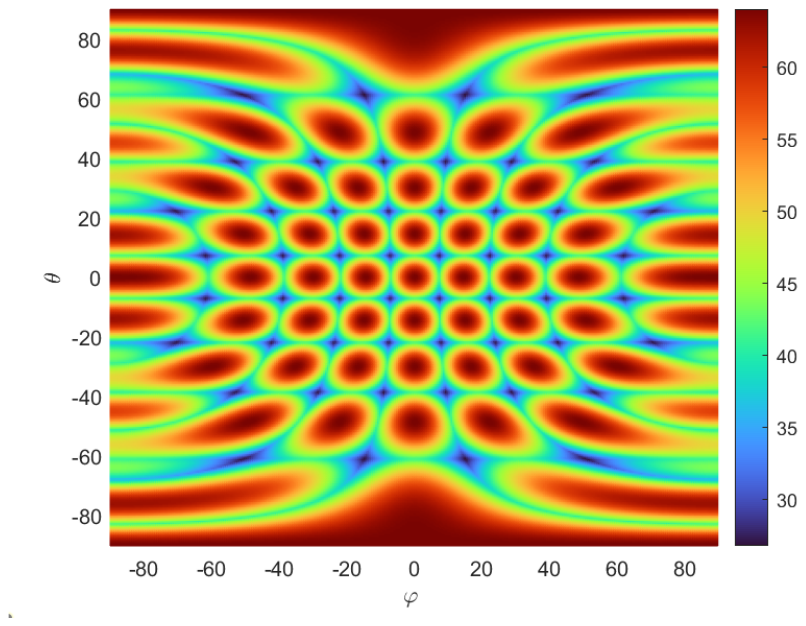}
  }
  
  \caption{The beam pattern of: (a) spherical DCAA; (b) conventional UPA with KPC-based HBF, and the envelope of beam pattern of: (c) spherical DCAA; (d) conventional UPA with KPC-based HBF.}
  \label{fig:beam pattern comparision}
\end{figure}




For ease of presentation, we transform the two-dimensional indexed orientation $(\eta_{p,q},\vartheta_{q}),p\in\mathcal{N}_{\eta}(\vartheta_q),q\in\mathcal{N}_{\vartheta}$ into one-dimensional indexed orientation $(\eta_{n},\vartheta_{n})$, where $n\in\mathcal{N}=\{1,2,...,N\}$, which satisfies
\begin{equation}
    n = p + N_{\eta}(\vartheta_q)+1+\sum_{k=-N_\vartheta}^{q-1} \Big(2N_{\eta}(\vartheta_k) + 1 \Big),
\end{equation}

The elevation and azimuth orientations of sUPAs with isotropic antenna elements and corresponding beam pattern of the setup are illustrated in Fig. \ref{fig:setup1} and \ref{fig:beam1}. The spherical DCAA exhibits a beam pattern with densely packed yet well-separated peaks. In each direction, the maximum magnitude (envelope) received among all the sUPAs, i.e., $r^*(\phi,\theta)=\max\limits_{n}r(\phi,\theta;\eta_{n},\vartheta_{n})$ is shown in Fig. \ref{fig:maxbeam1}. With this design, the entire spatial domain is covered without any holes, as every direction is served by at least one sUPA. This is quantified by the worst-case coverage metric $\min\limits_{\phi,\theta} r^*(\phi,\theta) > 15$ for $M=8$.

\begin{figure}[t]
  \centering
  \subfloat[\label{fig:implement1}]{
    \includegraphics[width=0.38\linewidth]{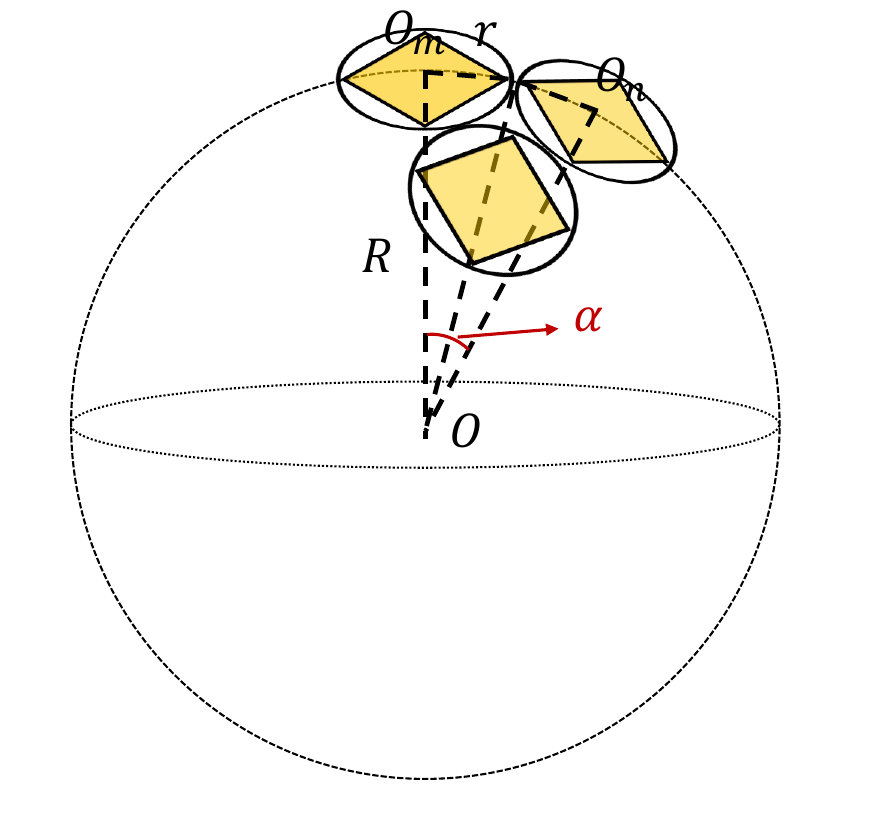}
  }
  \subfloat[\label{fig:implement2}]
  {
    \includegraphics[width=0.48\linewidth]{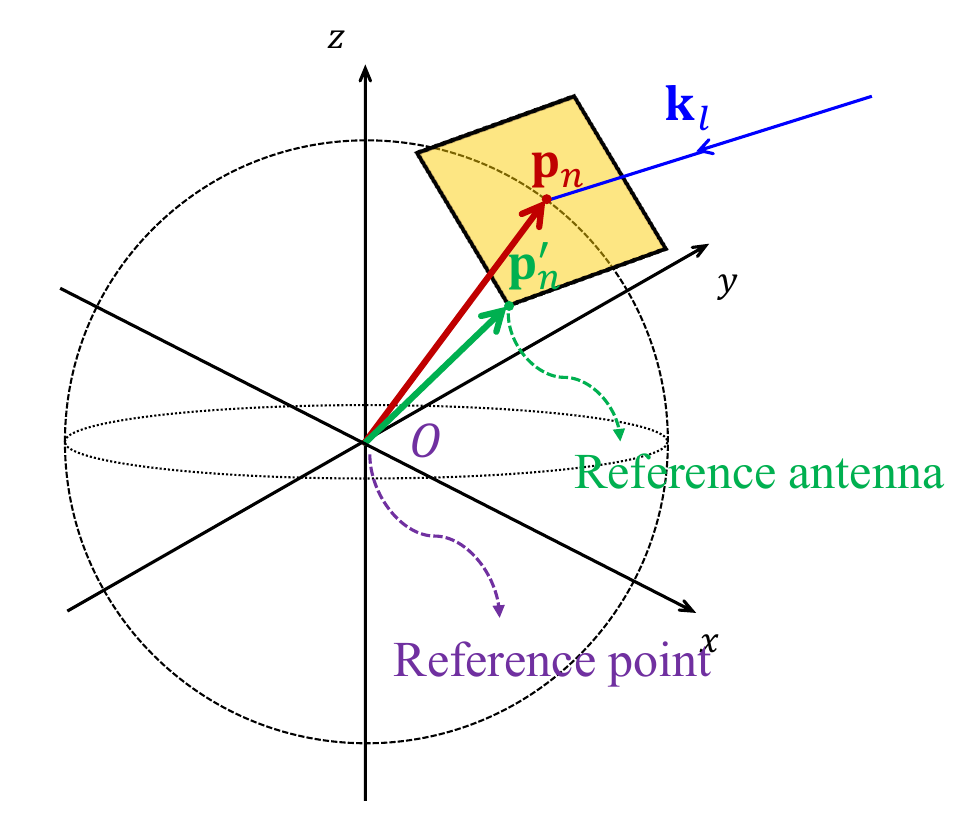}
  }
  
  \caption{Implementation of spherical DCAA (a) the critical collision state where all the sUPAs are closest to each other (b) the geometry relationship among sUPA center location $\mathbf{p}_n$, reference antenna location $\mathbf{p}_n'$, global reference point $O$ and the $l$th incoming signal vector $\mathbf{k}_l$.}
  \label{fig:implement}
\end{figure}


As shown in Fig. \ref{fig:implement}, in order to avoid signal blockage among sUPAs, we arrange all the sUPAs on a sphere with center $O$, where each sUPA is tangent to the spherical surface with tangent point $O_m,O_n$ etc. In addition, we consider the hitbox of sUPAs as a circumcircle with radius $r=Md/\sqrt{2}$ of the square shape of the array. The angle separation $\alpha$ between any two sUPAs is defined as the angle of the two vectors $\overrightarrow{OO_m}$ and $\overrightarrow{OO_n}$, 
and thus we have the following theorem.
\begin{theorem}\label{the:angle_separation}
In the aforementioned design, the minimum angle separation, denoted by $\alpha^+$, between any two sUPAs is $\arcsin\frac{2}{M}$.
\end{theorem}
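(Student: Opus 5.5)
The plan is to compute the angle $\alpha$ between two sUPA normals directly from their orientations and minimize it over all distinct pairs produced by \eqref{eq:thetaq} and \eqref{eq:etapq}. The tangent point $O_n$ lies along the unit normal direction $\mathbf{p}_n$ (up to sign), so
\begin{equation*}
\cos\alpha_{mn}=\mathbf{p}_m^\mathrm{T}\mathbf{p}_n=\cos\vartheta_m\cos\vartheta_n\cos(\eta_m-\eta_n)+\sin\vartheta_m\sin\vartheta_n .
\end{equation*}
This is the spherical law of cosines. Minimizing $\alpha$ therefore amounts to maximizing this expression over distinct pairs $(m,n)$. I will split into two cases: pairs in the same layer ($\vartheta_m=\vartheta_n$) and pairs in different layers.

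\textbf{Same layer $\vartheta_q$.}
\begin{itemize}
\item The closest pair is adjacent in azimuth, with $\Delta\eta=\arcsin\frac{2}{M\cos\vartheta_q}$.
\item Then $\cos\alpha=\cos^2\vartheta_q\cos\Delta\eta+\sin^2\vartheta_q$, so $1-\cos\alpha=\cos^2\vartheta_q(1-\cos\Delta\eta)$.
\item I will show this is at least $1-\cos\arcsin\frac{2}{M}=1-\sqrt{1-4/M^2}$.
\item Set $s=2/M$ and $c=\cos\vartheta_q$. The inequality becomes $c^2\bigl(1-\sqrt{1-s^2/c^2}\bigr)\ge 1-\sqrt{1-s^2}$.
\item Rewrite the left side as $c^2-c\sqrt{c^2-s^2}=\frac{c\,s^2}{c+\sqrt{c^2-s^2}}$, and the right side as $\frac{s^2}{1+\sqrt{1-s^2}}$.
\item Compare the reciprocals: I need $1+\sqrt{1-s^2/c^2}\le 1+\sqrt{1-s^2}$, which holds since $c\le1$.
\item Equality holds only at $\vartheta_q=0$. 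So same-layer separations are at least $\arcsin\frac{2}{M}$, with equality on the equator layer, where the separation is exactly $\arcsin\frac{2}{M}$.
\end{itemize}

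\textbf{Different layers $q\neq q'$.}
\begin{itemize}
\item Since $\cos(\eta_m-\eta_n)\le1$, we get $\cos\alpha\le\cos(\vartheta_q-\vartheta_{q'})$.
\item Hence $\alpha\ge|\vartheta_q-\vartheta_{q'}|\ge\arcsin\frac{2}{M}$ by \eqref{eq:thetaq}.
\item Equality is attained, e.g.\ by the pair $(0,\vartheta_0)$ and $(0,\vartheta_1)$, both with $\eta=0$.
\end{itemize}
Combining the two cases gives $\alpha^+=\arcsin\frac{2}{M}$.

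The main obstacle is the same-layer inequality, i.e.\ confirming that the widened azimuth step $\arcsin\frac{2}{M\cos\vartheta}$ more than compensates for the shrinking of the circle of latitude by the factor $\cos\vartheta$. This is what the rationalized comparison above handles. I also need $\frac{2}{M\cos\vartheta_q}\le1$ for all layers used, so that the azimuth step is defined. I would take this as implicit in the design, since $N_\eta(\vartheta_q)$ is otherwise undefined.

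A minor point is that non-adjacent azimuth neighbours give larger $\alpha$. This holds because $\cos\alpha$ is decreasing in $|\Delta\eta|$ on $[0,\pi]$, assuming $\phi_{\max}\le\pi$ so that the azimuth differences stay in that range.
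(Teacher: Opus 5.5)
Your proof is correct and has the same skeleton as the paper's: the same spherical-law-of-cosines formula for $\cos\alpha$, the same split into different-layer and same-layer pairs, and an identical different-layer argument via $\cos\alpha\le\cos(\vartheta_a-\vartheta_b)$.

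The real difference is how you prove the same-layer inequality. The paper shows that $f(\vartheta)=\cos^2\vartheta\sqrt{1-4/(M^2\cos^2\vartheta)}+\sin^2\vartheta$ is strictly decreasing on $(0,\arccos\frac{2}{M})$. It computes $f'(\vartheta)=\frac{2\sin\vartheta}{M}\,w(M\cos\vartheta)$ with $w(t)=t-\frac{t^2-2}{\sqrt{t^2-4}}$. It then argues $w<0$, because $w$ has no zero (squaring $w(t)=0$ gives $0=4$) and $w(3)<0$. Finally it bounds $f(\vartheta)\le f(0)=\sqrt{1-4/M^2}$. Your rationalization $c^2-c\sqrt{c^2-s^2}=\frac{cs^2}{c+\sqrt{c^2-s^2}}$ (with your $s=2/M$) instead compares any layer directly with the equator in one line. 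It needs no calculus and covers negative $\vartheta_q$ without a symmetry remark.

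The paper's route proves slightly more than the theorem needs, namely that the same-layer gap grows monotonically with $|\vartheta|$. Yours is shorter, and it makes explicit two points the paper leaves implicit. First, the bound $\arcsin\frac{2}{M}$ is actually attained; the paper's proof only establishes $\alpha^+\ge\arcsin\frac{2}{M}$. Second, non-adjacent azimuth pairs can be ignored.

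Your caveat about needing $\frac{2}{M\cos\vartheta_q}\le 1$ is harmless. A layer that violates it can hold only the $\eta=0$ sUPA, so no same-layer pair arises there. This matches the paper's simulation setup: $N=397$ for $M=16$ is reproduced exactly by taking $N_\eta(\vartheta_{\pm 12})=0$.
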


\begin{IEEEproof}
    Please refer to Appendix \ref{app:minimum angle}
\end{IEEEproof}

In the critical collision state, the circumcircles of the sUPAs are tangent to each other as shown in Fig. \ref{fig:implement1}, and the minimum angle separation $\alpha^*$ between any two sUPAs can be obtained as
\begin{equation}
    \tan\frac{\alpha^*}{2}=\frac{r}{R}=\frac{Md}{\sqrt{2}R}.
\end{equation}
Therefore, the minimum angle separation $\alpha^+$ in design must be greater than the critical collision state minimum angle separation $\alpha^*$, i.e., $\alpha^+\ge\alpha^*$, and the minimum radius $R$ of the sphere can be obtained
\begin{equation}\label{eq:radius}
    R=\frac{Md}{\sqrt{2}\tan(\frac{1}{2}\arcsin\frac{2}{M})}\overset{(a)}{\approx}\frac{M^2\lambda}{2\sqrt{2}},
\end{equation}
where $(a)$ holds when $M$ is large.
The relationship between $R,M$ and carrier frequency is shown in Fig. \ref{fig:radius}. The minimum radius is proportional to the total number of antenna elements, and it decreases as the carrier frequency grows up. The complete procedure for designing the parameters $\big\{M^2,N,R,\{\eta_{n}, \vartheta_{n}\}_{n\in\mathcal{N}}\big\}$ of the proposed spherical DCAA architecture is illustrated in Fig. \ref{fig:flowchart}.

\begin{figure}[htbp]
        \centering \includegraphics[width=0.65\columnwidth]{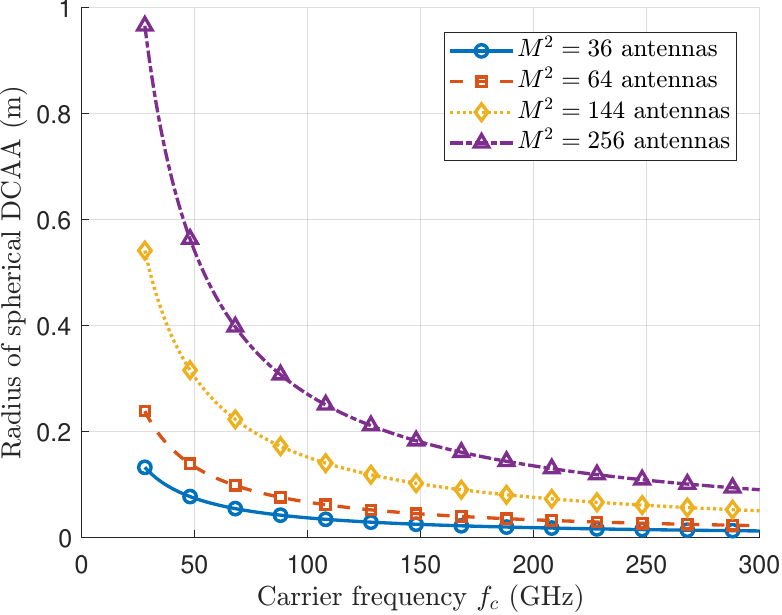}
        \caption{\label{fig:radius}Illustration of the minimum sphere radius needed to implement DCAA w.r.t. carrier frequency and number of antenna elements.}
\end{figure}

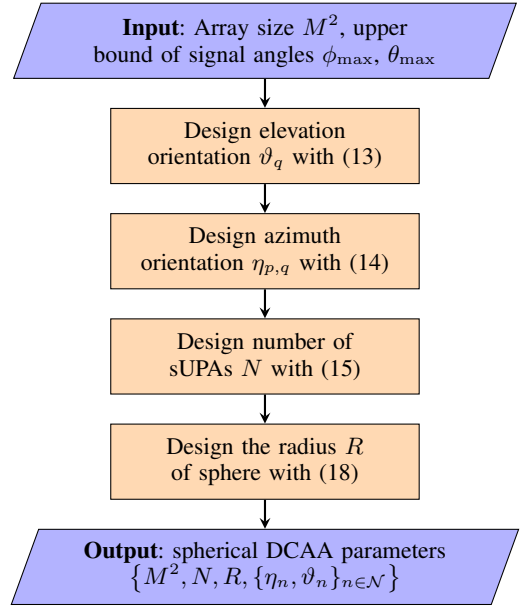
\begin{figure}[htbp]\label{fig:flowchart}
\centering
\tikzset{every node/.style={font=\small}} 
\begin{tikzpicture}[node distance=1.4cm]
\node (in) [io, text width=15em] {\textbf{Input}: Array size $M^2$, upper bound of signal angles $\phi_{\max}$, $\theta_{\max}$};
\node (pro1) [process, text width=11em, below of=in] {Design elevation orientation $\vartheta_q$ with \eqref{eq:thetaq}};
\node (pro2) [process, text width=11em, below of=pro1] {Design azimuth orientation $\eta_{p,q}$ with \eqref{eq:etapq}};
\node (pro3) [process, text width=11em, below of=pro2] {Design number of sUPAs $N$ with \eqref{eq:number of N}};
\node (pro4) [process, text width=11em, below of=pro3] {Design the radius $R$ of sphere with \eqref{eq:radius}};
\node (out) [io, text width=15em, below of=pro4] {\textbf{Output}: spherical DCAA parameters $\big\{M^2,N,R, \{\eta_{n}, \vartheta_{n}\}_{n\in\mathcal{N}}\big\}$};

\draw [arrow] (in) -- (pro1);
\draw [arrow] (pro1) -- (pro2);
\draw [arrow] (pro2) -- (pro3);
\draw [arrow] (pro3) -- (pro4);
\draw [arrow] (pro4) -- (out);
\end{tikzpicture}
\caption{Flowchart of spherical DCAA parameter design}
\label{fig:flowchart}
\end{figure}

As a result, for the UAV swarm ISAC system with one LoS component and $L$ NLoS components, the equivalent channel $\mathbf{h}$ between the Tx and the $N$ sUPAs of the spherical DCAA can be expressed as
\begin{equation}\label{eq:channel}
    \mathbf{h}=\sum_{l=0}^L\alpha_l\mathbf{r}(\phi_l,\theta_l),
\end{equation}
where $\alpha_l$ denotes the path coefficient; $\phi_l,\theta_l$,  denote the path azimuth and elevation AoAs respectively; and $\mathbf{r}(\phi_l,\theta_l)=[r(\phi,\theta;\eta_n,\vartheta_n)]_{n\in\mathcal{N}}\in\mathbb{C}^{N\times 1}$ denotes the array response vector of spherical DCAA.

As shown in Fig. \ref{fig:implement2}, we denote $\mathbf{p}_n=[-R\cos\vartheta_n\cos\eta_n,-R\cos\vartheta_n\sin\eta_n,-R\sin\vartheta_n]^\mathrm{T}$ as the position vector of the center of the $n$th sUPA; and $\mathbf{p}_n'$ is denoted as the position vector of the reference antenna of the $n$th sUPA, which satisfies
\begin{equation}
    \mathbf{p}_n'=\mathbf{p}_n-\frac{1}{2}\mathbf{R}(\eta_n,\vartheta_n)(\mathbf{u}_1+\mathbf{u}_2)Md. 
\end{equation}

To model the blockage effect, $\epsilon(-\mathbf{k}_l^\mathrm{T}\mathbf{p}_n)$ is introduced, where $\mathbf{k}_l=[-\cos\theta_l\cos\phi_l,-\cos\theta_l\sin\phi_l,-\sin\theta_l]^\mathrm{T}\in\mathbb{C}^{3\times 1}$ is the $l$th incoming signal vector and $\epsilon(\cdot)$ denotes the Heaviside step function which forces the response of a sUPA to zero whenever it is not directly illuminated (where $\mathbf{k}_l^\mathrm{T}\mathbf{p}_n>0$) by the incoming plane wave.
Take the center of the sphere as the reference point, the array response can be written as
\begin{equation}\label{eq:r}
\begin{aligned}
    \mathbf{r}&(\phi_l,\theta_l)=\Big[\epsilon(-\mathbf{k}_l^\mathrm{T}\mathbf{p}_n) e^{-j\frac{2\pi}{\lambda}\frac{\mathbf{k}_l^\mathrm{T}\mathbf{p}_n'}{|\mathbf{p}_n'|}} r(\phi_l,\theta_l;\eta_n,\vartheta_n) \Big]_{n\in\mathcal{N}}\\
    &=\Big[ \epsilon(-\mathbf{k}_l^\mathrm{T}\mathbf{p}_n)e^{-j\frac{2\pi}{\lambda}\frac{\mathbf{k}_l^\mathrm{T}\mathbf{p}_n'}{|\mathbf{p}_n'|}} M^2\sqrt{G( \eta_n-\phi_l,\vartheta_n-\theta_l )}\\&\quad\times H_M\big(\cos\theta_l\sin(\phi_l-\eta_n)\big)\\&\quad\times H_M\big(\sin\theta_l\cos\vartheta_n-\cos\theta_l\sin\vartheta_n\cos(\phi_l-\eta_n)\big)\Big]_{n\in\mathcal{N}}.
\end{aligned}
\end{equation}

Due to the limited RF chains, we need to choose $N_{\mathrm{RF}}$ signals from $N$ spherical DCAA outputs. To fully exploit the energy-focusing ability of spherical DCAA, we propose the selection scheme based on energy maximization, i.e., $\mathbf{S}=\arg\max\limits_{\mathbf{S}} \| \mathbf{y} \|_2^2$, to choose the $N_{\mathrm{RF}}$ spherical DCAA outputs with the maximum sum energy.
To realize the energy based sUPA selection, we only need to sweep all the spherical DCAA ports to obtain full information about the response magnitude of different sUPAs. This would require $\left \lceil \frac{N}{N_{\mathrm{RF}}} \right \rceil$ sweeps to cover all the $N$ sUPAs. Therefore, $\mathbf{S}$ will be determined and fixed for the subsequent analysis.


\section{Spherical DCAA vs Conventional UPA}\label{sec:DCAAvsUPA}
\begin{figure}[htbp]
        \centering \includegraphics[width=1\linewidth]{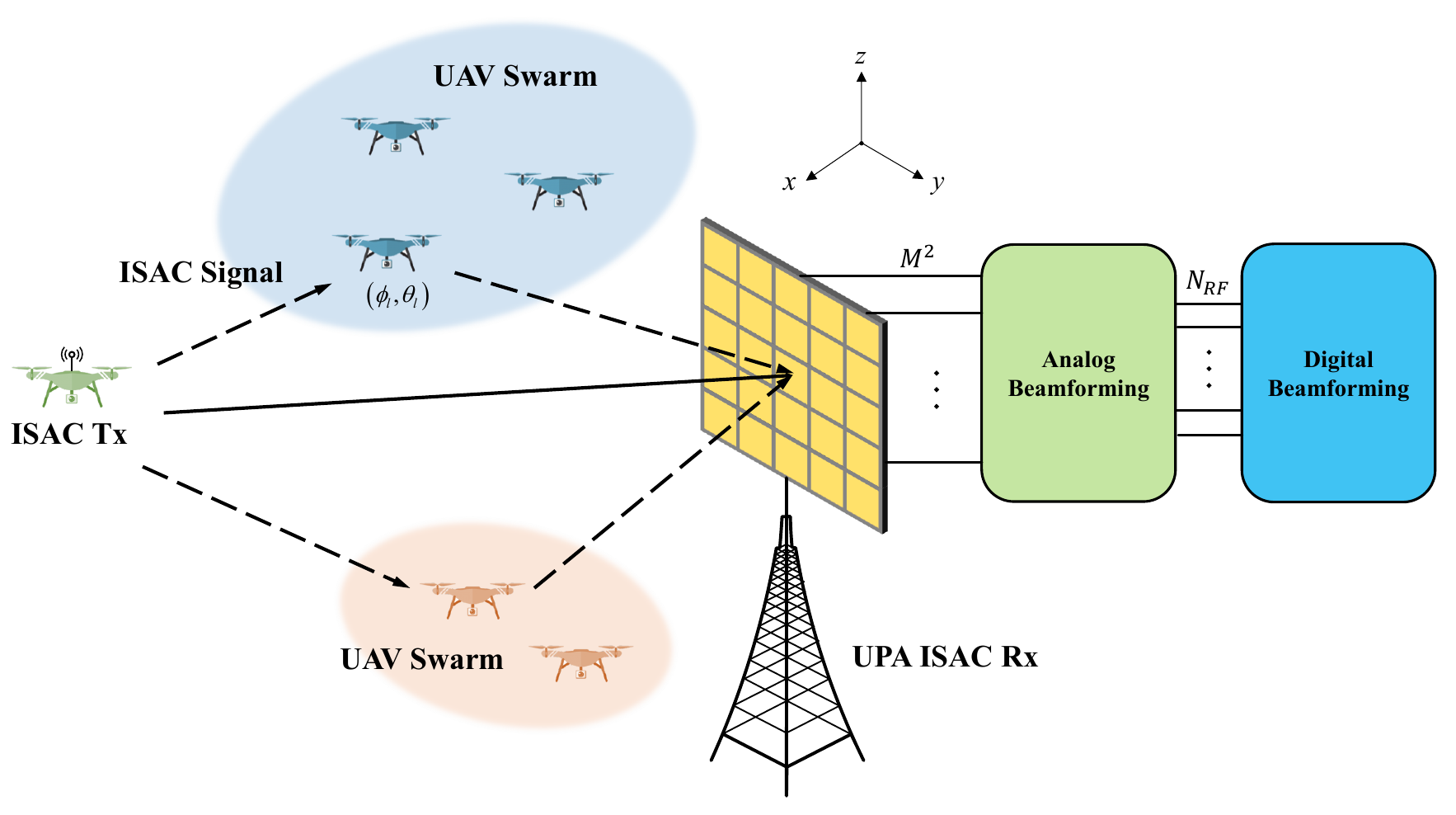}
        \caption{\label{fig:KPC_env}An illustration of conventional UPA-based ISAC for low-altitude UAV swarm.}
\end{figure}

\begin{figure}[t]
        \centering \includegraphics[width=0.95\linewidth]{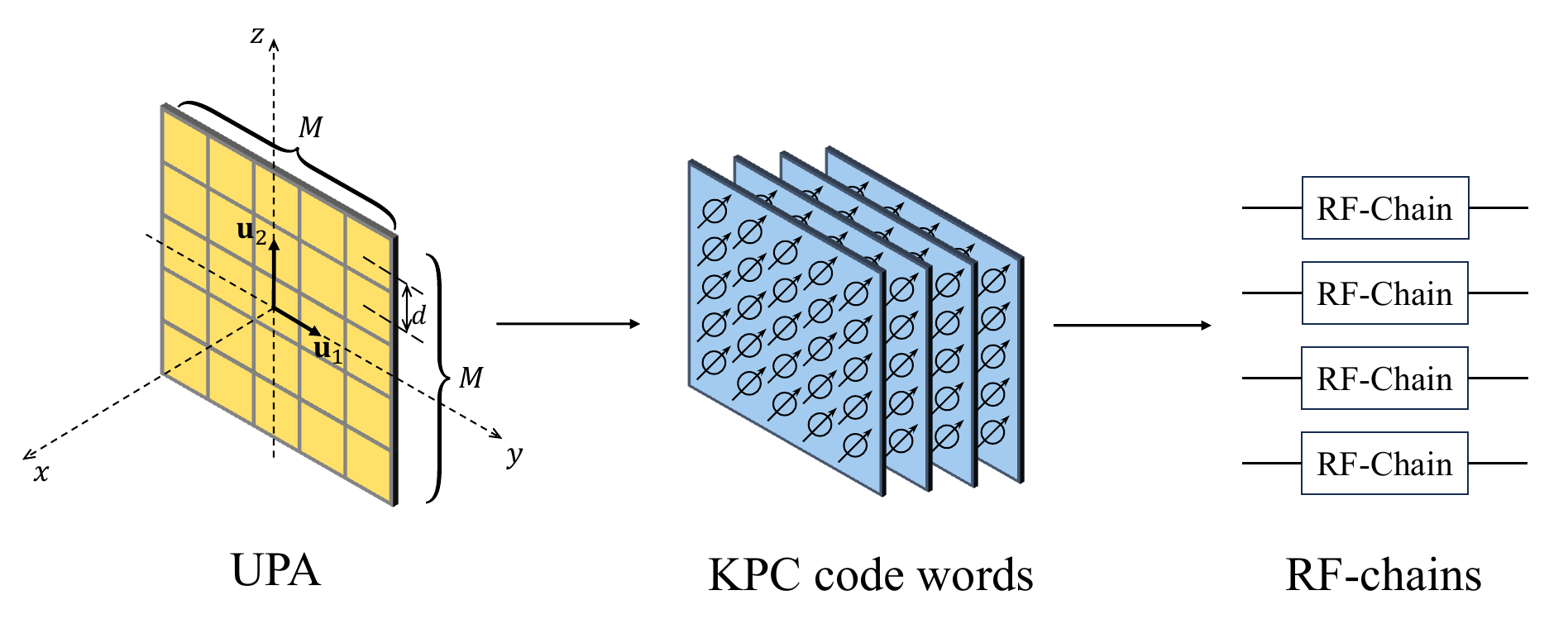}
        \caption{\label{fig:KPC_architecture}Antenna architecture of conventional UPA using the classical KPC-based hybrid beamforming.}
\end{figure}

\subsection{Conventional UPA}
As a comparison, we consider a UPA using KPC-based HBF, as illustrated in Fig. \ref{fig:KPC_env}. A UPA facing $x$-axis with orientation vectors $\mathbf{u}_1$, $\mathbf{u}_2$ and $M\times M$ antenna elements is shown in Fig. \ref{fig:KPC_architecture}. The adjacent antenna elements are separated by half wavelength in both directions. Thus, according to \eqref{eq:sUPA steer vector}, the array response vector of the UPA for the propagation path $\mathbf{k}$ can also be written as $\mathbf{a}(\mathbf{u}_1,\mathbf{u}_2,\mathbf{k})$.

The azimuth and elevation DFT codewords are respectively defined as
\begin{equation}
    \begin{aligned}
        \mathbf{c}_h(\phi_p) &= [1, e^{j\pi\sin\phi_p}, \dots, e^{j\pi(M-1)\sin\phi_p}]^\mathrm{T} \in \mathbb{C}^{M \times 1}, \\
        \mathbf{c}_v(\theta_q) &= [1, e^{j\pi\sin\theta_q}, \dots, e^{j\pi(M-1)\sin\theta_q}]^\mathrm{T} \in \mathbb{C}^{M \times 1},
    \end{aligned}
\end{equation}
where $\sin\phi_p = \frac{2p}{M}$ and $\sin\theta_q = \frac{2q}{M}$. The $(p,q)$-th codeword of the KPC is obtained via the Kronecker product of the azimuth and elevation DFT codewords, i.e.,
\begin{equation}
    \mathbf{c}(\phi_p, \theta_q) = \mathbf{c}_v(\theta_q) \otimes \mathbf{c}_h(\phi_p) \in \mathbb{C}^{M^2 \times 1}.
\end{equation}

The KPC implemented via the HBF architecture is denoted by $\mathbf{C} = [\mathbf{c}(\phi_p, \theta_q)]_{
        -\frac{N_v-1}{2} \le p \le \frac{N_v-1}{2}, \,
        -\frac{N_h-1}{2} \le q \le \frac{N_h-1}{2}
    } \in \mathbb{C}^{M^2 \times N_v N_h},$
where $N_v$ and $N_h$ are the numbers of codewords in the elevation and azimuth codebooks, respectively, given by $N_v = 2\left\lfloor \frac{\sin\theta_{\max}}{2/M} \right\rfloor + 1$ and $N_h = 2\left\lfloor \frac{\sin\phi_{\max}}{2/M} \right\rfloor + 1$. Consequently, the beam pattern for the codeword $\mathbf{c}(\phi_p, \theta_q)$, denoted by $r_{\mathrm{BF}}(\phi, \theta; \phi_p, \theta_q)$, can be expressed as
\begin{equation}\label{eq:KPC response}
\begin{aligned}
    r_{\mathrm{BF}}(\phi, \theta; \phi_p, \theta_q) 
    &= \mathbf{c}^\mathrm{H}(\phi_p, \theta_q) \mathbf{a}(\mathbf{u}_1, \mathbf{u}_2, \mathbf{k}) \\
    &= \sqrt{G(\phi, \theta)} M^2 H_M(\cos\theta \sin\phi - \sin\phi_p) \\
    &\quad \times H_M(\sin\theta - \sin\theta_p) \\
    &\triangleq \sqrt{G(\phi, \theta)} f_{\mathrm{BF}}(\phi, \theta; \phi_p, \theta_q).
\end{aligned}
\end{equation}

For any given KPC codeword $\mathbf{c}(\phi_p,\theta_q)$, we aim to find the  AoA $(\phi,\theta)$ that yields the maximum and minimum response.

\begin{lemma}\label{lemma:KPC peak direction}
For a given KPC codeword $\mathbf{c}(\phi_p,\theta_q)$, we have $\max\limits_{(\phi,\theta)}  f_{\mathrm{BF}}(\phi,\theta;\phi_p,\theta_q)=M^2$, and the maximum is achieved if and only if $(\phi,\theta)=\big(\arcsin\frac{\sin\phi_p}{\cos\theta_q},\theta_q\big)$.
\end{lemma}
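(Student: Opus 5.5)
The plan is to work directly from the factorized form \eqref{eq:KPC response}. There $f_{\mathrm{BF}}(\phi,\theta;\phi_p,\theta_q)=M^2H_M(x_1)H_M(x_2)$ with
\[
x_1=\cos\theta\sin\phi-\sin\phi_p,\qquad x_2=\sin\theta-\sin\theta_q .
\]
Here I read the $\theta_p$ in \eqref{eq:KPC response} as the typo $\theta_q$. I read the maximization in terms of the magnitude $|f_{\mathrm{BF}}|$, as in Lemma \ref{lemma:rotate_peak}.

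\textbf{Upper bound.} From the definition $H_M(x)=\frac1M\sum_{m=0}^{M-1}e^{j\pi mx}$ and the triangle inequality, $|H_M(x)|\le 1$. Equality holds iff all the phases $e^{j\pi m x}$ coincide, i.e.\ iff $x\in 2\mathbb{Z}$. Hence $|f_{\mathrm{BF}}|\le M^2$, with equality iff both $x_1\in2\mathbb{Z}$ and $x_2\in2\mathbb{Z}$.

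\textbf{Range restriction.} Next I use the angular domain to rule out the nonzero even integers (the grating-lobe case).
\begin{itemize}
\item Since $\theta,\theta_q\in[-\pi/2,\pi/2]$, we have $|x_2|\le 2$. Equality would need $\sin\theta=\pm1$ and $\sin\theta_q=\mp1$. This is excluded because the codebook has $|\sin\theta_q|=|2q/M|\le\sin\theta_{\max}<1$.
\item Similarly $|x_1|\le|\cos\theta\sin\phi|+|\sin\phi_p|<2$, because $|\sin\phi_p|\le\sin\phi_{\max}<1$.
\end{itemize}
So equality forces $x_1=x_2=0$. I expect this to be the main obstacle: the statement says ``if and only if'', so I must justify that the codebook angles stay strictly inside $(-\pi/2,\pi/2)$. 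If $\theta_{\max}$ or $\phi_{\max}$ could equal $\pi/2$, I would instead handle the boundary case $|x|=2$ explicitly and note that it is excluded by the codebook construction $N_v,N_h$.

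\textbf{Solving.} The condition $x_2=0$ gives $\sin\theta=\sin\theta_q$. Since $\arcsin$ is injective on $[-\pi/2,\pi/2]$, this means $\theta=\theta_q$. Substituting into $x_1=0$ gives $\sin\phi=\sin\phi_p/\cos\theta_q$. This has a solution in the domain exactly when $|\sin\phi_p|\le\cos\theta_q$, and then uniquely $\phi=\arcsin(\sin\phi_p/\cos\theta_q)$.

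\textbf{Attainment.} For the converse, substituting this $(\phi,\theta)$ makes both arguments zero. Since $H_M(0)=1$, we get $f_{\mathrm{BF}}=M^2$, which shows the maximum is attained. I would remark that the lemma implicitly assumes the feasibility condition $|\sin\phi_p|\le\cos\theta_q$. Without it the peak $M^2$ is not reached within the visible region, so I would state that condition as a standing assumption on the codeword.
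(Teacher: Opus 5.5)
Your route is the paper's route. The paper's proof asserts that the maximum $M^2$ is reached iff $\cos\theta\sin\phi-\sin\phi_p=\sin\theta-\sin\theta_q=0$, then inverts on $[-\pi/2,\pi/2]$. Your version is more careful: it gives the triangle-inequality bound, notes that $|H_M(x)|=1$ on all of $x\in2\mathbb{Z}$ (so grating lobes must be ruled out), and states the feasibility condition $|\sin\phi_p|\le\cos\theta_q$. The paper skips all three. Under your standing assumption $\theta_{\max},\phi_{\max}<\pi/2$, your argument is complete.

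Your fallback, however, is wrong: the boundary case $|x_1|=2$ is \emph{not} excluded by the construction of $N_v,N_h$. If $\phi_{\max}=\pi/2$ and $M$ is even, then $N_h=M+1$ and $\sin\phi_p=\pm1$ is a codeword. The paper's own simulation uses $M=16$ with $-8\le p,q\le 8$, so this case occurs there. Take $(\phi_p,\theta_q)=(\pi/2,0)$ and evaluate at $(\phi,\theta)=(-\pi/2,0)$. You get $x_1=-2$ and $x_2=0$, with $H_M(-2)=\frac{1}{M}\sum_{m=0}^{M-1}e^{-j2\pi m}=1$. So $f_{\mathrm{BF}}=M^2$ there as well, and the ``only if'' part of the lemma fails; this is the familiar endfire grating lobe of a half-wavelength array.

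Hence the step you called the main obstacle cannot be proved from the codebook. The conditions $|\sin\phi_p|<1$ and $|\sin\theta_q|<1$ must be added as hypotheses, alongside your feasibility condition. The paper's proof needs them too but does not say so.

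The $\theta$ side is automatic once you require the formula to make sense. That requirement means $\cos\theta_q>0$, hence $|x_2|<2$, which forces $\theta=\theta_q$. Feasibility then gives $|x_1|\le 2\cos\theta_q$. So the only codewords that actually break uniqueness are $(\phi_p,\theta_q)=(\pm\pi/2,0)$. With this hypothesis stated explicitly, your proof is a correct and more rigorous version of the paper's argument.
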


\begin{IEEEproof}
According to equation \eqref{eq:KPC response}, $f_{\mathrm{BF}}(\phi,\theta;\phi_p,\theta_q)$ achieves its maximum if and only if $\cos\theta\sin\phi-\sin\phi_p=\sin\theta-\sin\theta_q=0$. Since $|\phi|\le\pi/2,|\theta|\le\pi/2$, we have $(\phi,\theta)=\big(\arcsin\frac{\sin\phi_p}{\cos\theta_q},\theta_q\big)$.
\end{IEEEproof}

\begin{lemma}\label{lemma:KPC null no rotate}
For a given KPC codeword $\mathbf{c}(\phi_p,\theta_q)$, we have $\min\limits_{(\phi,\theta)}  f_{\mathrm{BF}}(\phi,\theta;\phi_p,\theta_q)=0$, and the minimum is achieved if and only if $\theta=\arcsin(\sin\theta_p+\frac{2m}{M})\ \text{with}\ \left|\sin\theta_p + \frac{2m}{M}\right| \le 1$, or $\phi=\arcsin\frac{\sin\phi_p+2n/M}{\cos\theta}\ \text{with}\  \left|\frac{\sin\phi_p + 2n/M}{\cos\theta}\right| \le 1 , \ \cos\theta \neq 0$, where $m,n\in\mathbb{Z} \setminus  \left \{ 0 \right \}$.
\end{lemma}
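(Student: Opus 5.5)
The plan is to read off the zeros of $f_{\mathrm{BF}}$ from the product structure in \eqref{eq:KPC response}, in the same way as the proof of Lemma \ref{lemma:sUPA null with rotate}. Since
$f_{\mathrm{BF}}(\phi,\theta;\phi_p,\theta_q)=M^2 H_M(\cos\theta\sin\phi-\sin\phi_p)\,H_M(\sin\theta-\sin\theta_q)$,
the modulus of $f_{\mathrm{BF}}$ is nonnegative, so any zero attains the minimum and the minimum value is $0$. Also, $f_{\mathrm{BF}}$ vanishes if and only if at least one of the two Dirichlet factors vanishes. It therefore suffices to characterize the zero set of a single factor $H_M(x)$ for the range of arguments that occur, and then solve each resulting scalar equation for $\theta$ or $\phi$. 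Throughout I read $\theta_p$ in the statement as the codeword elevation $\theta_q$.

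First I determine the zeros of $H_M$. From \eqref{eq:Dirichlet kernel}, for $\sin(\frac{\pi}{2}x)\neq 0$ we have $|H_M(x)|=|\sin(\frac{\pi}{2}Mx)|/|M\sin(\frac{\pi}{2}x)|$. This vanishes exactly when $Mx/2\in\mathbb{Z}$ and $x/2\notin\mathbb{Z}$. When $x\in 2\mathbb{Z}$ the limiting value is $|H_M|=1$, as seen directly from $H_M(x)=\frac1M\sum_m e^{j\pi m x}$. Hence $H_M(x)=0$ if and only if $x=2m/M$ with $m\in\mathbb{Z}$ and $M\nmid m$. The arguments satisfy $|\cos\theta\sin\phi-\sin\phi_p|\le 2$ and $|\sin\theta-\sin\theta_q|\le 2$. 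The only grating-lobe values $x=\pm2$ (i.e.\ $m=\pm M$) can occur only at boundary points, and there the factor has modulus $1$. So within the admissible range the zero condition reduces to $x=2m/M$ with $m\neq 0$, where $m$ is implicitly restricted by the magnitude constraints.

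Next I solve each factor. For the elevation factor, $\sin\theta=\sin\theta_q+2m/M$. Since $|\theta|\le\pi/2$ and $\arcsin$ is a bijection $[-1,1]\to[-\pi/2,\pi/2]$, this is equivalent to $\theta=\arcsin(\sin\theta_q+2m/M)$ together with the feasibility condition $|\sin\theta_q+2m/M|\le1$. For the azimuth factor, $\cos\theta\sin\phi=\sin\phi_p+2n/M$. When $\cos\theta\neq0$ I divide through and invert the sine on $[-\pi/2,\pi/2]$, which gives $\phi=\arcsin\frac{\sin\phi_p+2n/M}{\cos\theta}$, valid exactly when the argument has modulus at most $1$. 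The case $\cos\theta=0$ has to be excluded, since then the equation forces $\sin\phi_p+2n/M=0$ independently of $\phi$. Taking the union of the two solution sets gives the stated ``or'' characterization. The converse direction is immediate: any such $(\phi,\theta)$ makes one factor zero.

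The main obstacle is handling the boundary and degenerate cases cleanly rather than the algebra. The first is the periodic peaks of the Dirichlet kernel at $x\in2\mathbb{Z}$, which must be shown not to produce spurious zeros. The second is the pole of $\cos\theta$, together with the precise restriction on $m,n$. I would settle the kernel issue by using the sum form of $H_M$ at the removable singularities, and I would state the $\cos\theta=0$ exclusion explicitly, as the lemma does.
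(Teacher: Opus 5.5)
Your route is the paper's: its proof is the one-line version of your argument. The product vanishes iff one Dirichlet factor has argument $2m/M$ or $2n/M$ with $m,n\neq 0$, and one then inverts with $\arcsin$. Reading $\theta_p$ as $\theta_q$ is also right. You go beyond the paper in the degenerate cases, and that is where the proposal slips. Two of your claims there are false, and correcting them shows the lemma cannot hold verbatim at boundary directions. The paper's proof simply does not look at these cases.

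\textbf{Grating lobes.} You correctly find $|H_M(\pm 2)|=1$. But you then say the zero condition ``reduces to $m\neq 0$, implicitly restricted by the magnitude constraints'' and call the converse immediate. The constraint $|\sin\theta_q+2m/M|\le 1$ does not rule out $m=\pm M$.
\begin{itemize}
\item Take $\sin\theta_q=-1$, i.e.\ $q=-M/2$. This codeword is in the codebook whenever $\theta_{\max}=\pi/2$, as in the simulations ($M=16$, $-8\le q\le 8$). With $m=M$ the lemma returns $\theta=\arcsin(1)=\pi/2$. There the elevation factor is a grating-lobe peak, and with $p=0$ one gets $|f_{\mathrm{BF}}|=M^2$, not $0$.
\item Likewise, $\sin\phi_p=-1$ with $n=M$ forces $\theta=0$ and $\phi=\pi/2$. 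This is again a peak when $q=0$.
\end{itemize}
The correct index set is $m,n\in\mathbb{Z}\setminus M\mathbb{Z}$, i.e.\ $0<|m|,|n|<M$ in the admissible range.

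\textbf{The poles.} Your own remark shows that at $\cos\theta=0$ the azimuth factor equals $H_M(-2p/M)$. This is zero for every $p\neq 0$. So $\theta=\pm\pi/2$ are genuine nulls, and excluding them breaks the ``only if'' direction whenever the elevation condition does not pick them up. Two examples:
\begin{itemize}
\item $\theta_q=\pi/2$ with $p\neq 0$: reaching $\theta=\pi/2$ needs $m=0$, which is excluded.
\item Any odd $M$: $\sin\theta_q+2m/M=\pm 1$ has no integer solution.
\end{itemize}

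What your argument, and the paper's, actually proves is the lemma with $M\nmid m,n$, plus the extra nulls at $\theta=\pm\pi/2$ that occur whenever $p\neq 0$. State the result in that corrected form rather than claiming the lemma exactly as written.
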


\begin{IEEEproof}
According to equation \eqref{eq:KPC response}, $f_{\mathrm{BF}}(\phi,\theta;\phi_p,\theta_q)$ achieves its minimum if and only if $\cos\theta\sin\phi-\sin\phi_p=\frac{2n}{M}$ or $\sin\theta-\sin\theta_q=\frac{2m}{M}$, which yields $\theta=\arcsin(\sin\theta_p+\frac{2m}{M})\ \text{with}\ \left|\sin\theta_p + \frac{2m}{M}\right| \le 1$, or $\phi=\arcsin\frac{\sin\phi_p+2n/M}{\cos\theta}\ \text{with}\ \left|\frac{\sin\phi_p + 2n/M}{\cos\theta}\right| \le 1 , \ \cos\theta \neq 0$, where $m,n\in\mathbb{Z} \setminus  \left \{ 0 \right \}$.
\end{IEEEproof}

Lemma \ref{lemma:KPC peak direction} indicates that with codeword $\mathbf{c}(\phi_p,\theta_q)$, the peak of the beam pattern of UPA deviates from the direction of the code word. Specifically, if $\sin\phi_p/\cos\theta_q>1$, the peak becomes meaningless because such an angle $\arcsin\frac{\sin\phi_p}{\cos\theta_q}$ does not exist. 
The beam pattern of KPC are shown in Fig. \ref{fig:KPC_beam}. In each direction, the maximum magnitude received among all the codewords are shown in Fig. \ref{fig:KPC_maxbeam}. Compared with conventional UPA using KPC-based HBF, the proposed spherical DCAA has denser and more regular-shaped peaks in the beam pattern, which can enhance its communication and sensing ability.




\subsection{Angular Resolution Comparison}

\begin{figure*}[t]
  \centering
  \subfloat[\label{fig:RAA_vertical_resolution_top}]{
    \includegraphics[width=0.23\linewidth]{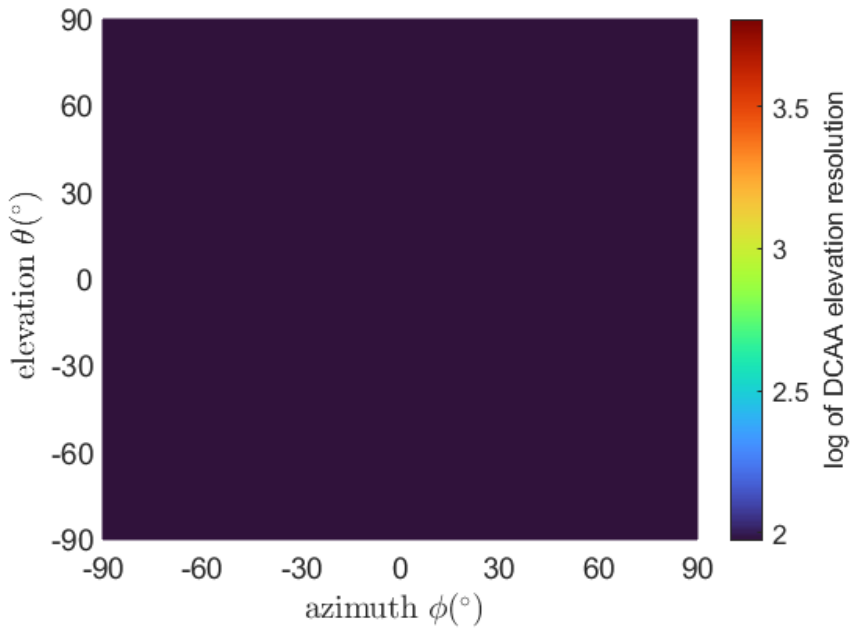}
  }
  \subfloat[\label{fig:KPC_vertical_resolution_top}]{
    \includegraphics[width=0.23\linewidth]{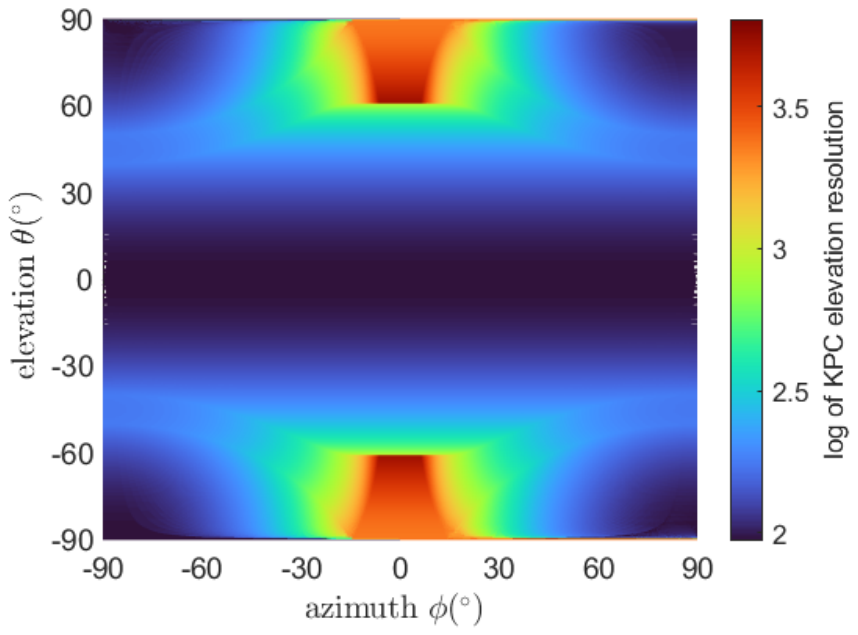}
  }
  \subfloat[\label{fig:RAA_horizontal_resolution_top}]{
    \includegraphics[width=0.23\linewidth]{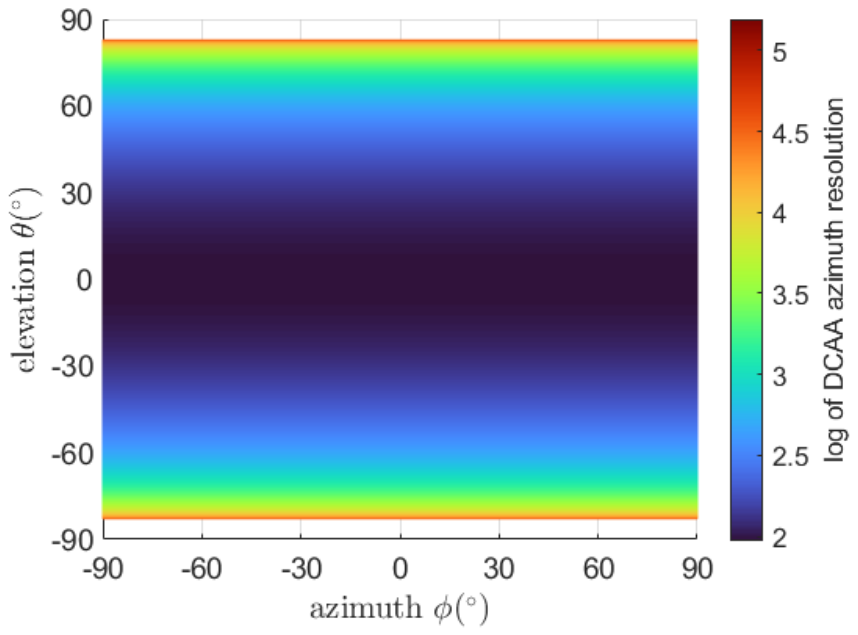}
  }
  \subfloat[\label{fig:KPC_horizontal_resolution_top}]{
    \includegraphics[width=0.23\linewidth]{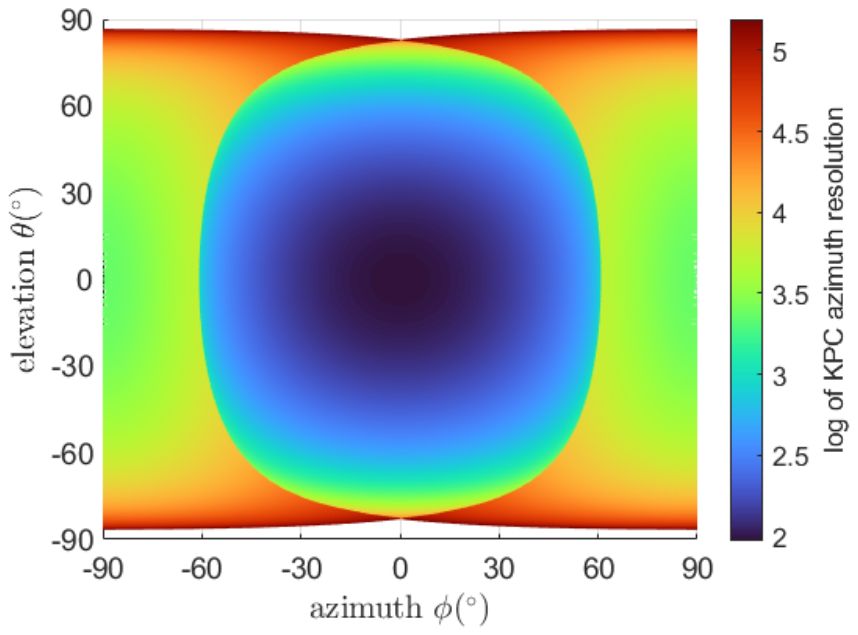}
  }
  \caption{Comparison of elevation and azimuth angular resolution between the proposed spherical DCAA and conventional UPA using KPC-based HBF. In each figure, the $x$-axis and $y$-axis represent the azimuth and elevation desired directions, respectively, while the color or magnitude indicates the angular resolution value in the corresponding direction. The resolution values, e.g., the color bar, are displayed on a logarithmic scale to highlight trends across directions and to illustrate the differences between spherical DCAA and conventional UPA using KPC-based HBF. (a) elevation resolution of spherical DCAA, (b) elevation resolution of conventional UPA using KPC-based HBF, (c) azimuth resolution of spherical DCAA , (d) azimuth resolution of conventional UPA using KPC-based HBF.    }
  \label{fig:resolution_compare}
\end{figure*}

Angular resolution is a key metric for evaluating the communication and sensing ability of an antenna array. It quantifies the array's ability to distinguish between two closely spaced incoming signals. The resolution depends on the beam pattern $ r(\phi, \theta; \phi', \theta')$, which is a function of both the \textbf{desired direction} $ (\phi', \theta') $, i.e., where the main lobe is pointed, and the \textbf{observation direction} $ (\phi, \theta) $, i.e., the angular variable at which the array response is evaluated. According to Lemma \ref{lemma:rotate_peak} and \ref{lemma:KPC peak direction}, for spherical DCAA the \textbf{desired direction} satisfies $(\phi',\theta')=(\eta,\vartheta)$, and for conventional UPA using KPC-based HBF we have $(\phi',\theta')=(\arcsin\frac{\sin\phi_p}{\cos\theta_q},\theta_q)$. For a given desired direction, the resolution is defined via the beam width of the main lobe along the elevation and azimuth axes, measured between the first nulls on either side of the peak. Specifically, for the elevation angular resolution, we fix the azimuth observation direction $\phi$ to $\phi'$, and varying the elevation observation direction $\theta$ around the desired direction $\theta'$, to find the first null points on both sides of it. A formal definition is given below.
\begin{definition}\label{def:resolution}
     The azimuth and elevation angular resolution $\gamma_h(\phi',\theta')$ and $ \gamma_v(\phi',\theta')$ are functions of the desired signal direction $(\phi',\theta')$, which is defined as half of the main lobe beam width $\Delta_\phi(\phi',\theta'),\Delta_\theta(\phi',\theta')$ of the beam pattern $r(\phi,\theta;\phi',\theta')$, i.e., 
\begin{equation}\label{eq:angular resolution vertical}
\begin{aligned}
    &\gamma_v(\phi',\theta')=\frac{1}{2}\Delta_\theta(\phi',\theta')=\frac{1}{2}|\theta_1-\theta_2|,\\
    &\gamma_h(\phi',\theta')=\frac{1}{2}\Delta_\phi(\phi',\theta')=\frac{1}{2}|\phi_1-\phi_2|,
    \end{aligned}
\end{equation}
\end{definition}
where $\theta_1,\theta_2$ and $\phi_1,\phi_2$ are the pairs of the closest nulls of $r(\phi,\theta;\phi',\theta')$ on both side of desired direction $\theta'$ and $\phi'$, as shown in Fig. \ref{fig:resolution_def}
\begin{figure}[!t] 
        \centering \includegraphics[width=0.65\columnwidth]{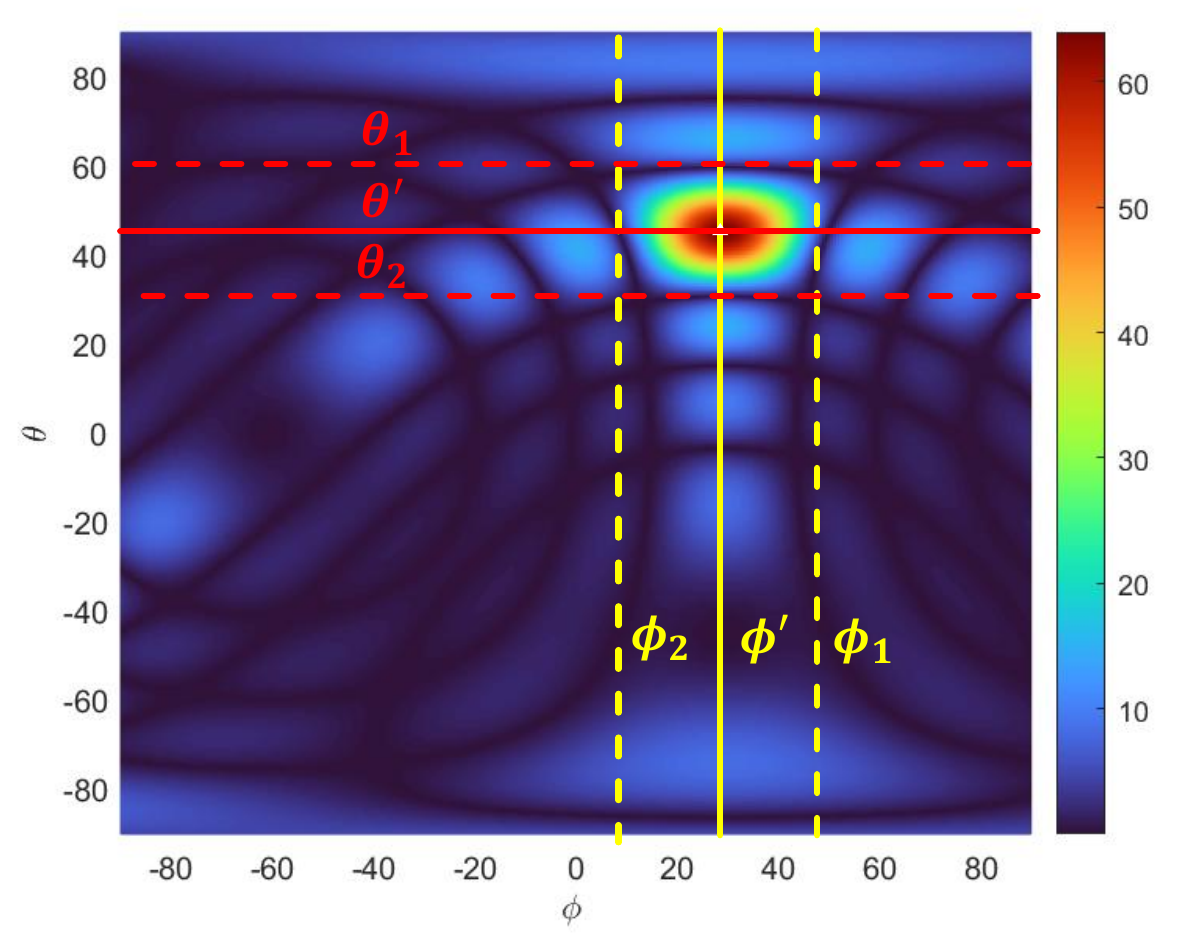}
        \caption{\label{fig:resolution_def}The illustration of $\theta_1,\theta_2$ and $\phi_1,\phi_2$}
\end{figure}

\begin{equation}\label{eq:ltheta}
\begin{aligned}
    \theta_1 = \mathop{\min}\limits_{\theta}\ \Big\{\theta>\theta'\big|r(\phi=\phi',\theta;\phi',\theta')=0\Big\},\\
    \theta_2 = \mathop{\max}\limits_{\theta}\ \Big\{\theta<\theta'\big|r(\phi=\phi',\theta;\phi',\theta')=0\Big\}.
    \end{aligned}
\end{equation}

\begin{equation}\label{eq:lvarphi}
\begin{aligned}
    \phi_1 = \mathop{\min}\limits_{\phi}\ \Big\{\phi>\phi'\big|r(\phi,\theta=\theta';\phi',\theta')=0\Big\},\\
    \phi_2 = \mathop{\max}\limits_{\phi}\ \Big\{\phi<\phi'\big|r(\phi,\theta=\theta';\phi',\theta')=0\Big\}.
        \end{aligned}
\end{equation}

By comparing Fig. \ref{fig:beam1} and \ref{fig:KPC_beam}, we can find that for the proposed spherical DCAA, the elevation angular resolution (or the elevation main lobe beamwidth) is uniform, and for the same desired elevation angle $\theta'$, it has uniform azimuth angular resolution (or the azimuth main lobe beamwidth) with respect to the desired azimuth angle $\phi'$. But in conventional UPA using KPC-based HBF, the azimuth angle resolution ability degrades as desired azimuth angle $\phi'$ gets larger. Besides, both spherical DCAA and conventional UPA using KPC-based HBF suffer from decreasing azimuth angle resolution ability as desired elevation angle $\theta'$ increases, as shown in the following theorem.

\begin{theorem}\label{the:resolution}
For any desired direction $(\phi',\theta')$ satisfies $\phi'>0,\theta'>0$, the elevation and azimuth angular resolution of spherical DCAA, denoted by $\gamma_v^{\mathrm{DCAA}}(\phi',\theta')$ and $\gamma_h^{\mathrm{DCAA}}(\phi',\theta')$ respectively, are given by:
\begin{equation}
\begin{aligned}
&\gamma_v^{\mathrm{DCAA}}(\phi',\theta')=\arcsin\frac{2}{M}, \\
&\gamma_h^{\mathrm{DCAA}}(\phi',\theta')=\arcsin\frac{2}{M\cos\theta'}, 
\end{aligned}
\end{equation}
and that of conventional UPA using KPC-based HBF, denoted by $\gamma_v^{\mathrm{UPA}}(\phi',\theta')$ and $\gamma_h^{\mathrm{UPA}}(\phi',\theta')$ respectively, are given by:

\begin{equation}
\begin{aligned}
&\gamma_v^{\mathrm{UPA}}(\phi',\theta')=\frac{1}{2}\big(\min\Theta^+-\max\Theta^-\big),\\
&\gamma_h^{\mathrm{UPA}}(\phi',\theta')=\frac{1}{2}\big(\min\Phi^+-\max\Phi^-\big),
\end{aligned}
\end{equation}
where $\Theta^+,\Theta^-,\Phi^+,\Phi^-$ are defined in Appendix \ref{app:resolution}.
\end{theorem}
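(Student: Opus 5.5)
The plan is to work directly from Definition \ref{def:resolution}, locating the nearest nulls of the relevant Dirichlet-kernel factors along the two axis-aligned cuts through the desired direction. For the spherical DCAA, Lemma \ref{lemma:rotate_peak} gives the desired direction $(\phi',\theta')=(\eta,\vartheta)$. The elevation cut is $\phi=\phi'=\eta$. On this cut the first factor reads $H_M(\cos\theta\sin 0)=H_M(0)=1$ and therefore has no zeros. The second factor becomes $H_M(\sin\theta\cos\vartheta-\cos\theta\sin\vartheta)=H_M(\sin(\theta-\vartheta))$. By Lemma \ref{lemma:sUPA null with rotate} and \eqref{eq:Dirichlet kernel}, it vanishes exactly when $\sin(\theta-\vartheta)=2q/M$ with $q\neq0$. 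The closest nulls on either side are therefore $\theta_{1,2}=\vartheta\pm\arcsin(2/M)$, which gives $\gamma_v^{\mathrm{DCAA}}=\arcsin(2/M)$ independently of $(\phi',\theta')$. One caveat: these nulls may leave $[-\pi/2,\pi/2]$ near the edges. I will state that the result is meant in the regime where they exist, or simply note that the formula describes the main-lobe half-width.

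The azimuth cut is $\theta=\theta'=\vartheta$. Here the first factor is $H_M(\cos\vartheta\sin(\phi-\eta))$. It vanishes when $\cos\vartheta\sin(\phi-\eta)=2p/M$, so its nearest zeros sit at $\phi-\eta=\pm\arcsin\!\big(2/(M\cos\vartheta)\big)$. The second factor is $H_M\big(\sin^2\vartheta+\cos^2\vartheta\ldots\big)$, more precisely $H_M\big(\sin\vartheta\cos\vartheta(1-\cos(\phi-\eta))\big)$. I must check that this factor does not produce a closer null. Its argument is $\tfrac12\sin2\vartheta\,(1-\cos\delta)$ with $\delta=\phi-\eta$. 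Within $|\sin\delta|\le 2/(M\cos\vartheta)$, this argument is at most $\tfrac12\sin 2\vartheta\cdot\sin^2\delta\le \sin\vartheta\cos\vartheta\cdot 4/(M^2\cos^2\vartheta)=4\tan\vartheta/M^2$, using $1-\cos\delta\le\sin^2\delta$ for $|\delta|\le\pi/2$. For this to stay below $2/M$ we need $\tan\vartheta<M/2$. That condition holds for the designed elevations, since $\cos\vartheta>2/M$ is required for the azimuth null to exist at all. This comparison is the one genuinely non-routine step and the main obstacle; if the crude bound is not tight enough, I would compare the two null conditions more carefully as functions of $\delta$. Once it is settled, $\gamma_h^{\mathrm{DCAA}}=\arcsin\!\big(2/(M\cos\theta')\big)$ follows.

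For the UPA with KPC-based HBF, Lemma \ref{lemma:KPC peak direction} gives the peak $(\phi',\theta')$ with $\sin\phi_p=\cos\theta'\sin\phi'$ and $\theta_q=\theta'$. The elevation cut $\phi=\phi'$ yields the null sets of Lemma \ref{lemma:KPC null no rotate}. The first family is $\Theta$-type nulls $\sin\theta=\sin\theta'+2m/M$. The second family is $\cos\theta\sin\phi'=\cos\theta'\sin\phi'+2n/M$, solved for $\theta$. I would define $\Theta^+$ and $\Theta^-$ as the sets of all such solutions above and below $\theta'$, respectively.

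Similarly, the azimuth cut $\theta=\theta'$ gives $\sin\phi=\sin\phi'+2n/(M\cos\theta')$; the elevation factor is identically $1$ on this cut. I would define $\Phi^\pm$ as the admissible solutions above and below $\phi'$. Definition \ref{def:resolution} then gives the stated expressions immediately. The assumption $\phi',\theta'>0$ only fixes signs so that these set definitions are unambiguous.
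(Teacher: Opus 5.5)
Your proposal is correct and follows essentially the same route as the paper's proof: restrict the beam pattern to the cuts $\phi=\phi'$ and $\theta=\theta'$ and read off the nearest Dirichlet-kernel nulls for the spherical DCAA. For the KPC-based UPA, you collect the null solutions above and below the peak into $\Theta^\pm,\Phi^\pm$ and apply Definition~\ref{def:resolution}, as the paper does.

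The only variation is in the DCAA azimuth cut. The paper rules out a closer null of the second factor through the explicit chain $\arcsin x<\arccos(1-x)<\arccos(1-x/\sin\vartheta)$ with $x=2/(M\cos\vartheta)$ (Appendix~\ref{app:sUPA beamwidth}). Your bound $4\tan\vartheta/M^2<2/M$ already closes this step, since $\cos\vartheta>2/M$ gives $\tan\vartheta<1/\cos\vartheta<M/2$, so the fallback comparison you mention is not needed.
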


\begin{IEEEproof}
Please refer to Appendix \ref{app:resolution}.
\end{IEEEproof}

The comparison of angular resolution is shown in Fig. \ref{fig:resolution_compare}. The blank area in Fig.\ref{fig:RAA_horizontal_resolution_top}-\ref{fig:KPC_horizontal_resolution_top} when $\theta$ is near $90^\circ$ means that the null-to-null angular resolution defined in \ref{def:resolution} does not exist in such areas. It shows that the elevation angular resolution of spherical DCAA is uniform in both dimensions, while the azimuth angular resolution is uniform in the azimuth dimension, but increases in the elevation dimension. The elevation angular resolution of conventional UPA using KPC-based HBF does not exhibit a monotonically increasing or decreasing trend in either dimension, and the azimuth angular resolution increases in both dimensions. In general, we can conclude that the proposed spherical DCAA has not only wider angular resolution coverage than conventional UPA using KPC-based HBF, but also better angular resolution ability.

\section{Sensing Algorithm for Spherical DCAA-based ISAC}\label{sec:algorithm}

\subsection{Spherical DCAA-based ISAC}
By substituting \eqref{eq:channel} into \eqref{eq:rx antenna 2}, we have

\begin{equation}\label{eq:receivey}
    \mathbf{y}=\mathbf{S}\mathbf{\tilde{y}}=\mathbf{S}\sum_{l=0}^{L}\alpha_l\mathbf{r}(\phi_l,\theta_l)x_s+\mathbf{S}\mathbf{z},
\end{equation}
where $\mathbf{z}=\left[ \sum_{m=0}^{M^2-1}z_{m,n} \right]_{n\in \mathcal{N}}\in\mathbb{C}^{N\times 1}$, and $z_{m,n}$ denotes the noise from the $m$th element of the $n$th sUPA. Suppose $z_{m,n}$ follows i.i.d. CSCG distribution with variance $\sigma^2$ at each time sample, i.e., $z_{m,n}\sim\mathcal{CN}(0,\sigma^2)$. Then we can get $\mathbf{z}\sim\mathcal{CN}(0,M^2\sigma^2\mathbf{I}_N)$. 

\subsubsection{Communication Model}
From the communication perspective, the received signal \eqref{eq:receivey} at the Rx can be written as
\begin{equation}
    \mathbf{y}_{\mathrm{c}}=\mathbf{h}_{\mathrm{c}}x_s+\mathbf{n}_{\mathrm{c}},
\end{equation}
where $\mathbf{h}_{\mathrm{c}}=\mathbf{S}\sum_{l=0}^{L}\alpha_l\mathbf{r}(\phi_l,\theta_l)\in\mathbb{C}^{N_{\mathrm{RF}}\times 1}$ is the equivalent communication channel, and $\mathbf{n}_{\mathrm{c}}=\mathbf{S}\mathbf{z}$ denotes the denotes the AWGN vector.
Accordingly, the uplink communication rate is given by:
\begin{equation}
    \mathcal{R}=\log_2\Big(1+\frac{\|\mathbf{h}_\mathrm{c}\|^2P_t}{M^2\sigma^2}\Big).
\end{equation}

\subsubsection{Sensing Model}
For sensing, the signal \eqref{eq:receivey} can be reformulated as $\mathbf{y}_{\mathrm{s}}$
\begin{equation}\label{eq:sensing data}
\begin{aligned}
\mathbf{y}_{\mathrm{s}}=\sum_{l=0}^{L}\bar\alpha_l\mathbf{h}_{\mathrm{s}}(\phi_l,\theta_l)+\mathbf{n}_{\mathrm{s}},
\end{aligned}
\end{equation}
where $\bar\alpha_l=\alpha_lx_s$, $\mathbf{h}_{\mathrm{s}}(\phi_l,\theta_l)=\mathbf{S}\mathbf{r}(\phi_l,\theta_l)\in\mathbb{C}^{N_{\mathrm{RF}}\times 1}$ is the equivalent steering vector, and $\mathbf{n}_{\mathrm{s}}=\mathbf{S}\mathbf{z}\in\mathbb{C}^{N_{\mathrm{RF}}\times 1}$ denotes the noise vector. Based on \eqref{eq:sensing data}, we try to obtain $\phi_l$ and $\theta_l$. 


\subsection{Sensing Algorithm Design}

By concatenating the signal $\mathbf{y}$ from $K$ different time slots, we obtain the snapshot matrix $\mathbf{Y} \in \mathbb{C}^{N_{\mathrm{RF}} \times K}$ 
\begin{equation}\label{eq:snapshot}
    \mathbf{Y}(:,k)=\mathbf{y}[k],
\end{equation}
where $k=1,2,...,K$ is the index for the $k$th time slot. It can be reformulated as
\begin{equation}
\mathbf{Y} = \mathbf{H} \mathbf{X},
\end{equation}
where $\mathbf{H} = [\mathbf{h}_{\mathrm{s}}(\phi_0,\theta_0), \ldots, \mathbf{h}_{\mathrm{s}}(\phi_L,\theta_L)]\in\mathbb{C}^{N_{\mathrm{RF}}\times L}$ is the array manifold matrix and $\mathbf{X}\in\mathbb{C}^{L\times K}$ contains the complex amplitudes. The AoA estimation problem aims to estimate the angle pairs $\mathbb{S} = \{(\phi_0,\theta_0), \ldots, (\phi_L,\theta_L)\}$ from $\mathbf{Y}$. Since the steer vector of spherical DCAA lacks rotational invariance, we employ the MUSIC algorithm. The covariance matrix $\mathbf{C}_{\mathbf{Y}}\in\mathbb{C}^{N_{\mathrm{RF}}\times N_{\mathrm{RF}}}$ of $\mathbf{Y}$ is computed as
\begin{equation}
\mathbf{C}_{\mathbf{Y}} = \frac{1}{K} \mathbf{Y} \mathbf{Y}^{\mathrm{H}},
\end{equation}
and its eigenvalue decomposition (EVD) $\mathbf{C}_{\mathbf{Y}}=\mathbf{E}_{\mathrm{s}}\mathbf{\Sigma}_{\mathrm{s}}\mathbf{E}_{\mathrm{s}}^{\mathrm{H}}+\mathbf{E}_{\mathrm{n}}\mathbf{\Sigma}_{\mathrm{n}}\mathbf{E}_{\mathrm{n}}^{\mathrm{H}}$ yields the signal subspace matrix $\mathbf{E}_{\mathrm{s}}\in\mathbb{C}^{N_{\mathrm{RF}}\times (L+1)}$, noise subspace matrix $\mathbf{E}_{\mathrm{n}}\in\mathbb{C}^{N_{\mathrm{RF}}\times (N_{\mathrm{RF}}-L-1)}$, and diagonal matrices $\mathbf{\Sigma}_{\mathrm{s}}\in\mathbb{C}^{(L+1)\times (L+1)}$ and $\mathbf{\Sigma}_{\mathrm{n}}\in\mathbb{C}^{(N_{\mathrm{RF}}-L-1)\times (N_{\mathrm{RF}}-L-1)}$. The MUSIC spatial spectrum is then formulated as
\begin{equation}
P_{\mathrm{MUSIC}}(\phi,\theta) = \frac{1}{\mathbf{h}_{\mathrm{s}}^{\mathrm{H}}(\phi,\theta) \mathbf{E}_{\mathrm{n}} \mathbf{E}_{\mathrm{n}}^{\mathrm{H}} \mathbf{h}_{\mathrm{s}}(\phi,\theta)}.
\end{equation}

The peaks of this spectrum correspond to the estimated AoAs, denoted by $\hat{\mathbb{S}} = \{(\hat{\phi}_0,\hat{\theta}_0), \ldots, (\hat{\phi}_{L_p},\hat{\theta}_{L_p})\}$, where $L_p \leq L+1$. The procedure is summarized in Algorithm~\ref{alg:algorithm}.

\begin{algorithm}[htbp]
\caption{Proposed Algorithm for ISAC with spherical DCAA}
\label{alg:algorithm}
\renewcommand{\algorithmicrequire}{\textbf{Input:}}
\renewcommand{\algorithmicensure}{\textbf{Output:}}
\begin{algorithmic}[1]
\REQUIRE Signal $\mathbf{y}_{\mathrm{s}}$ in \eqref{eq:sensing data}
\ENSURE $\phi_{l}$, $\theta_{l}$, $\forall l$
\STATE Obtain $\mathbf{Y}$ from $\mathbf{y}_{\mathrm{s}}$ as in \eqref{eq:snapshot}
\STATE Compute covariance matrix $\mathbf{C}_{\mathbf{Y}}$
\STATE Perform EVD of $\mathbf{C}_{\mathbf{Y}}$ to obtain noise subspace $\mathbf{E}_{\mathrm{n}}$
\FOR{$\theta \in [-\theta_\mathrm{max}, \theta_\mathrm{max}]$}
\FOR{$\phi \in [-\phi_\mathrm{max}, \phi_\mathrm{max}]$}
\STATE Calculate steering vector $\mathbf{h}_{\mathrm{s}}(\phi,\theta)$
\STATE Evaluate $P_{\mathrm{MUSIC}}(\phi,\theta)$
\ENDFOR
\ENDFOR
\STATE Find peaks of $P_{\mathrm{MUSIC}}(\phi,\theta)$ to obtain $\hat{\mathbb{S}}$
\RETURN $\phi_l, \theta_l$ for $l = 0,1,\ldots,L_p$
\end{algorithmic}
\end{algorithm}

\section{Simulation Results}\label{sec:simulation}

\begin{figure}[t]
  \centering
  \subfloat[Spherical DCAA \label{fig:}]{
    \includegraphics[width=0.4\linewidth]{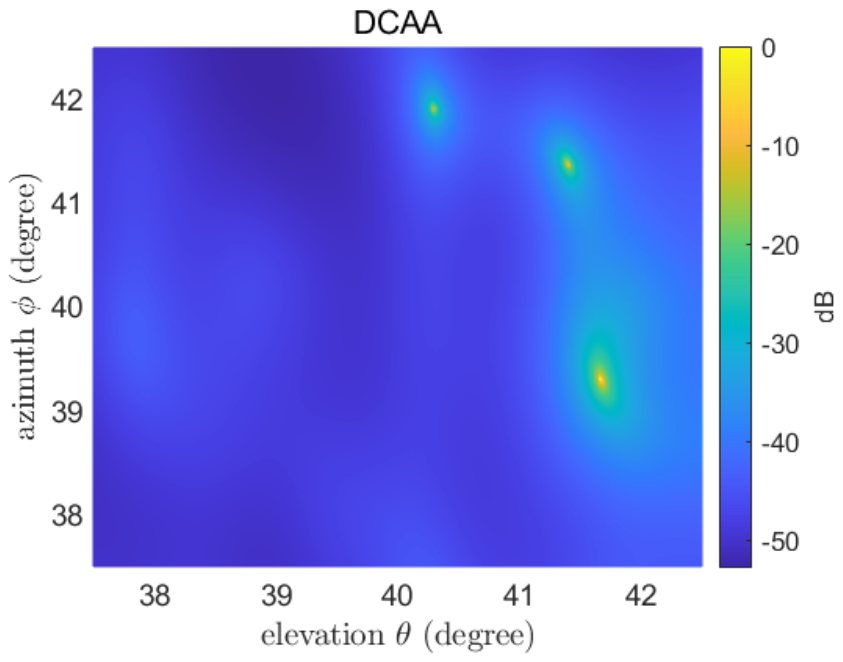}
  }
  \subfloat[Spherical DCAA \label{fig:}]{
    \includegraphics[width=0.4\linewidth]{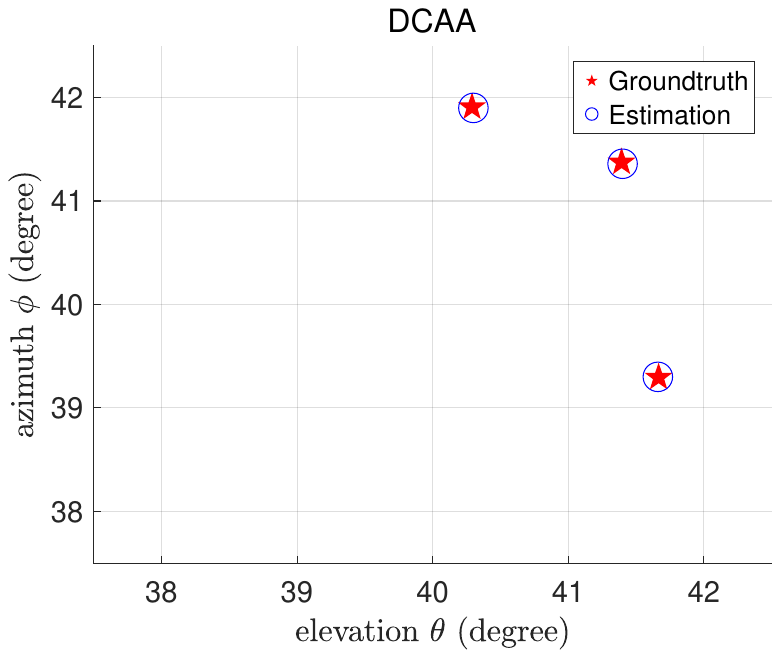}
  }
  \\
  \subfloat[UPA with conventional KPC\label{fig:}]{
    \includegraphics[width=0.4\linewidth]{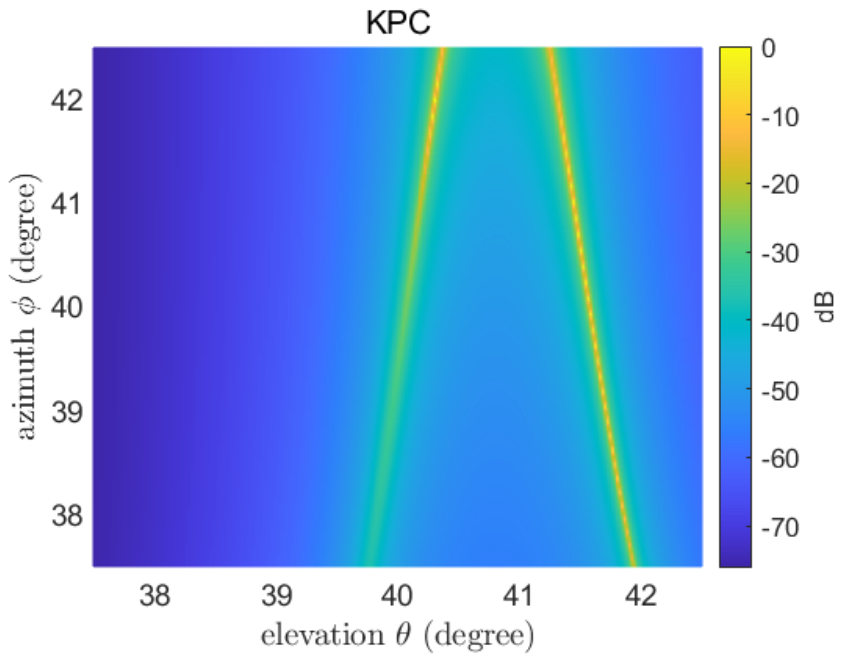}
  }
  \subfloat[UPA with conventional KPC\label{fig:}]{
    \includegraphics[width=0.4\linewidth]{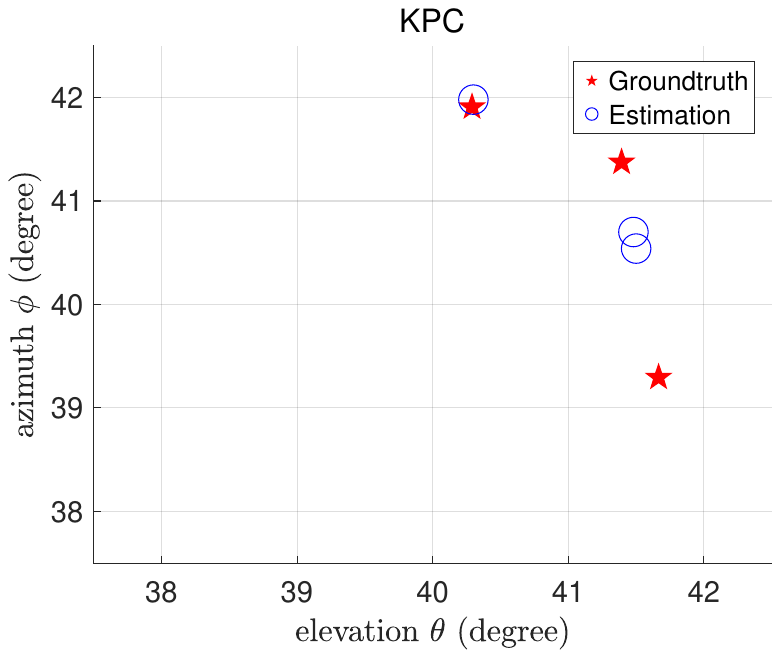}
  }
  
  \caption{2D angle estimation results when the swarm is centered at $(40^\circ,40^\circ)$. Figure (a)(c) are the MUSIC spectrum. Figure (b)(d) are the groundtruth and estimation results.}
  \label{fig:2d angle result}
\end{figure}

In this section, we provide numerical results to verify the performance of spherical DCAA in ISAC system. For spherical DCAA, we set the maximum orientation of sUPAs and antenna elements in each sUPA as $\vartheta_{\max}=\eta_{\max}=\pi/2,M^2=256$ at Rx, respectively. Therefore, we can compute $N_\vartheta=12,N_{\eta}(\vartheta)=\left \lfloor \frac{\pi}{2}/\arcsin\frac{2}{16\cos\vartheta} \right \rfloor$, and the total number of sUPA is $N=397$. 
For KPC-based UPA, the number of elevation and azimuth codewords is $N_v=17$ and $N_h=17$, and the $(p,q)$th codeword’s angle is $(\phi_p,\theta_q)=(\arcsin\frac{2p}{16},\arcsin\frac{2q}{16}),-8\le p,q\le 8$. The number of RF chains is set to $N_{\mathrm{RF}}=8$. The carrier frequency is set as $f_c=39$ GHz. The radius of the DCAA sphere is $R=0.2705$ m.
In addition, the radiation patterns of antenna elements follow the 3GPP antenna model, which is expressed in dB as \cite{3GPPchannel}
\begin{equation}
\begin{aligned}
    &G_{\mathrm{dB}}(\phi,\theta)\\
    &=-\min\big\{ -\big(A_{\mathrm{dB}}(\phi=0,\theta) + A_{\mathrm{dB}}(\phi,\theta=0)\big),A_{\mathrm{max}} \big\},\\
{\text{with }}
&A_{\mathrm{dB}}(\phi=0,\theta)=A_0^{\mathrm{dB}}-\min\Big\{ 12\Big(\frac{\theta}{\theta_{\mathrm{3dB}}}\Big)^2, \mathrm{SLA_V} \Big\},\\
&A_{\mathrm{dB}}(\phi,\theta=0)=A_0^{\mathrm{dB}}-\min\Big\{ 12\Big(\frac{\phi}{\phi_{\mathrm{3dB}}}\Big)^2, A_{\mathrm{max}} \Big\},
\end{aligned}
\end{equation}
among which $\theta_{\mathrm{3dB}}$ and $\phi_{\mathrm{3dB}}$ account for the elevation and azimuth 3dB beamwidth, $A_0^{\mathrm{dB}}$ is the peak antenna gain in dB, $A_{\mathrm{max}}=\mathrm{SLA_V}=30\mathrm{dB}$. For conventional UPA, $\theta_{\mathrm{3dB}}^{\mathrm{KPC}},\phi_{\mathrm{3dB}}^{\mathrm{KPC}}$ is set to $\pi$ to cover the entire direction, while for spherical DCAA, $\theta_{\mathrm{3dB}}^{\mathrm{DCAA}},\phi_{\mathrm{3dB}}^{\mathrm{DCAA}}$ is set to $0.3\pi$, as each sUPA is only responsible for a narrower angular region. Besides, $A_0^{\mathrm{dB}}$ in conventional UPA and spherical DCAA are set to 0dB and 12.79dB to guarantee the same total power gain for all directions. 



A UAV swarm is simulated by three low-altitude UAV targets with uniformly distributed elevation and azimuth angles in a confined area with $5^\circ \times 5^\circ$ angular region. Note that due to the close proximity of UAV swarms, adjacent targets may not be perfectly discriminated, so the criterion termed as \textit{average missed targets} is incorporated to evaluate the performance, which is defined as
\begin{equation}
    \varepsilon  =\frac{1}{Q}\sum_{i=1}^{Q}\Big( \mathrm{card}\big(\mathbb{S}^i \big)-\mathrm{card}\big(\hat{\mathbb{S}}^i \big) \Big),
\end{equation}
where $Q$ denotes the total number of testing rounds, $\mathbb{S}^i$ and $\hat{\mathbb{S}}^i$ denote the real and estimation angle set of $i$th test round, respectively. We assume that $\mathrm{card}\big(\mathbb{S}^i \big)\ge\mathrm{card}\big(\hat{\mathbb{S}}^i \big)$, which holds when the noise is rather small.

The 2D angle estimation results for the UAV swarm are presented in Fig. \ref{fig:2d angle result}. When the swarm is centered at $(40^\circ, 40^\circ)$, the proposed spherical DCAA method successfully detects all three targets and estimates their angles $(\theta,\phi)$ with high precision. In contrast, the conventional UPA fails to distinguish these targets due to severe spectral aliasing, leading to significant estimation errors. This performance degradation is primarily attributed to the inferior angular resolution, i.e., wider main lobe beamwidth of the conventional UPA.


Fig. \ref{fig:missing compare} compares the average number of missed targets between spherical DCAA and conventional UPA. Overall, spherical DCAA demonstrates superior detection ability. For a given elevation UAV swarm center angle $\theta$, its average missed target count remains nearly constant across different azimuth angles $\phi$, due to its invariant angular resolution in the azimuth domain. However, this count increases with $\theta$ because the azimuth angular resolution degrades at higher elevations, affecting the detection ability. Conversely, for conventional UPA, the number of missed targets generally increases with both $\theta$ and $\phi$, consistent with a progressive loss of angular resolution. An exception occurs at $\theta = 80^\circ$, where its elevation angular resolution decreases as $\phi$ increases. This positive effect outweighs the increase in azimuth resolution, leading to better detection performance at larger $\phi$.

\begin{figure}[t]
  \centering
  \subfloat[\label{fig:}]{
    \includegraphics[width=0.45\linewidth]{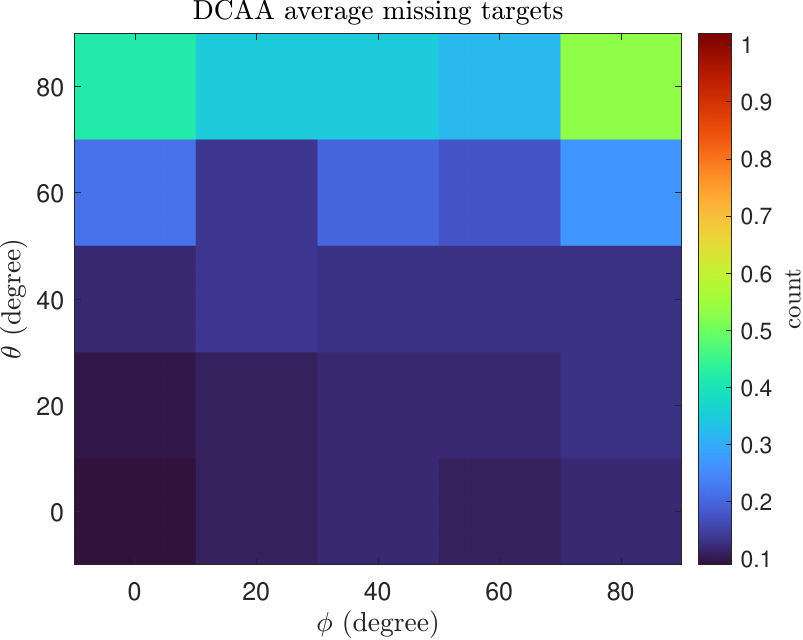}
  }
  \subfloat[\label{fig:}]{
    \includegraphics[width=0.45\linewidth]{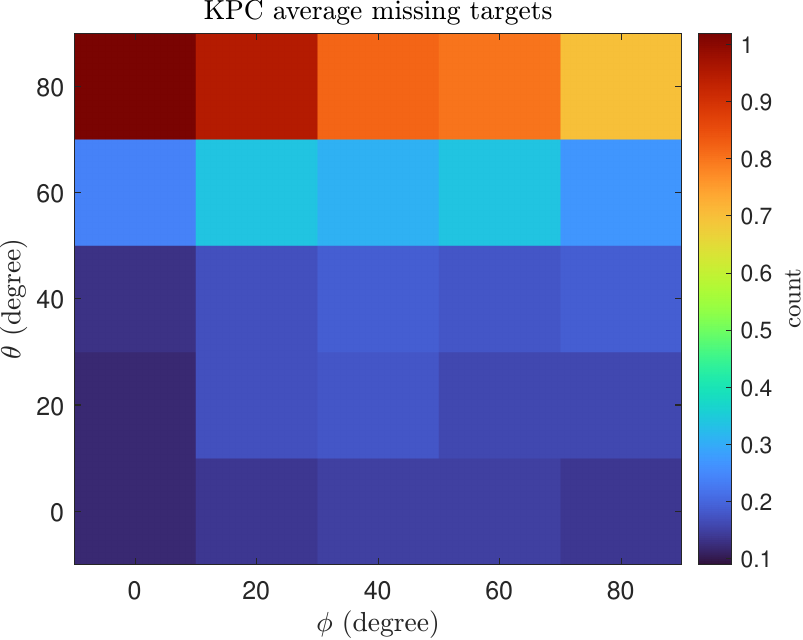}
  }
  
  \caption{Comparison of average missing targets between spherical DCAA and conventional UPA using KPC-based HBF. In each figure, the $x$-axis and $y$-axis represent the azimuth and elevation UAV swarm center, respectively. (a) spherical DCAA, (b) conventional UPA using KPC-based HBF.}
  \label{fig:missing compare}
\end{figure}

Fig. \ref{fig:angle compare} presents the RMSE comparison for angle $(\theta,\phi)$ estimation between spherical DCAA and conventional UPA using KPC-based HBF. Overall, spherical DCAA achieves higher estimation accuracy. For spherical DCAA, the elevation angle RMSE is lower than the azimuth angle RMSE. Both of them exhibit similar trends: the RMSE remains stable for a fixed $\theta$ across varying $\phi$, owing to a constant angular resolution in the azimuth domain. As $\theta$ increases, however, the azimuth angular resolution degrades, leading to a rise in the azimuth angle estimation RMSE. The elevation angle estimation RMSE also increases because the broadening azimuth main lobe interferes with the elevation main lobe, degrading estimation performance. For conventional UPA using KPC-based HBF, the elevation angle RMSE is similarly lower than the azimuth angle RMSE, and both generally increase with $\theta$ and $\phi$, consistent with its deteriorating angular resolution. 

\begin{figure*}[t]
  \centering
  \subfloat[\label{fig:DCAA_theta_RMSE}]{
    \includegraphics[width=0.23\linewidth]{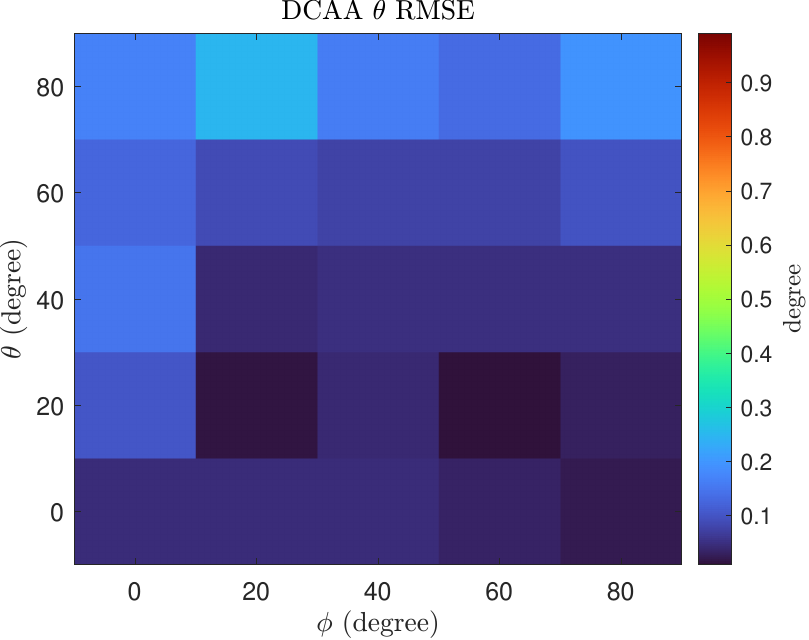}
  }
  \subfloat[\label{fig:KPC_theta_RMSE}]{
    \includegraphics[width=0.23\linewidth]{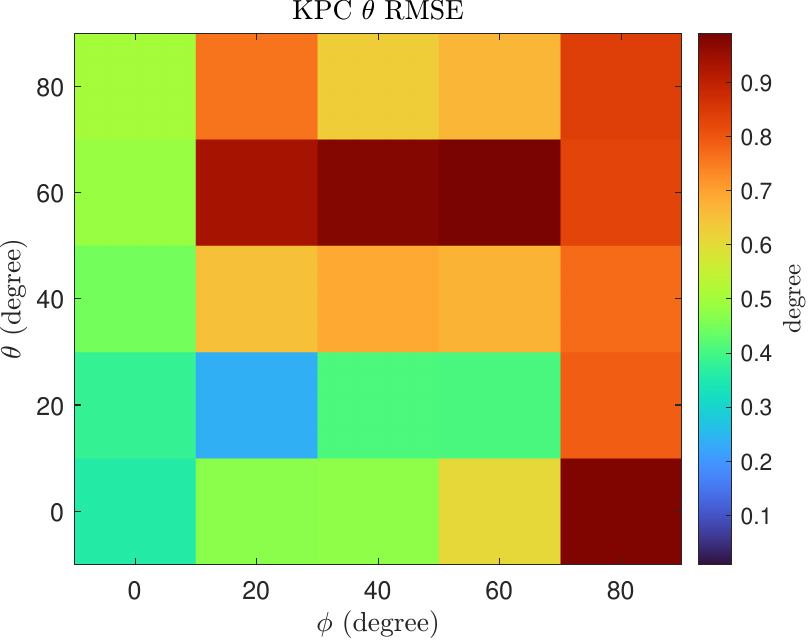}
  }
  \subfloat[\label{fig:DCAA_phi_RMSE}]{
    \includegraphics[width=0.23\linewidth]{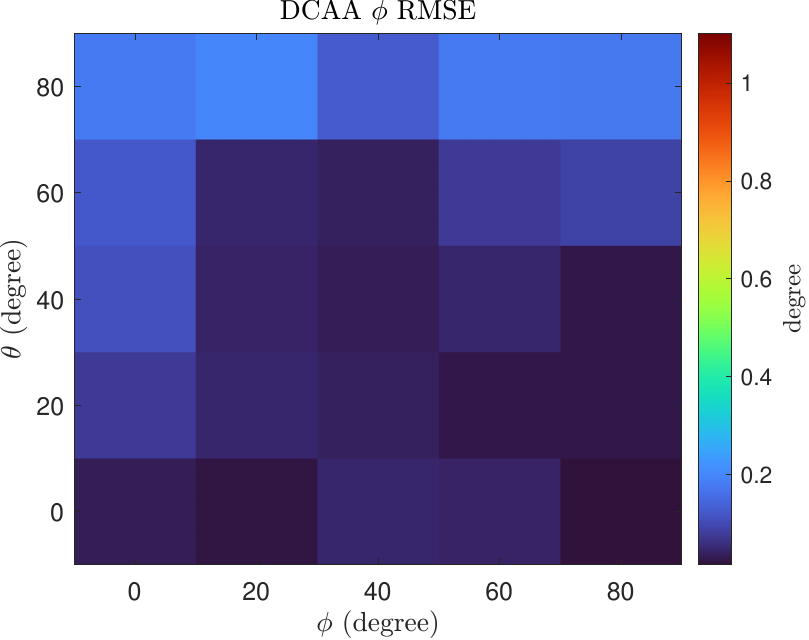}
  }
  \subfloat[\label{fig:DCAA_phi_RMSE}]{
    \includegraphics[width=0.23\linewidth]{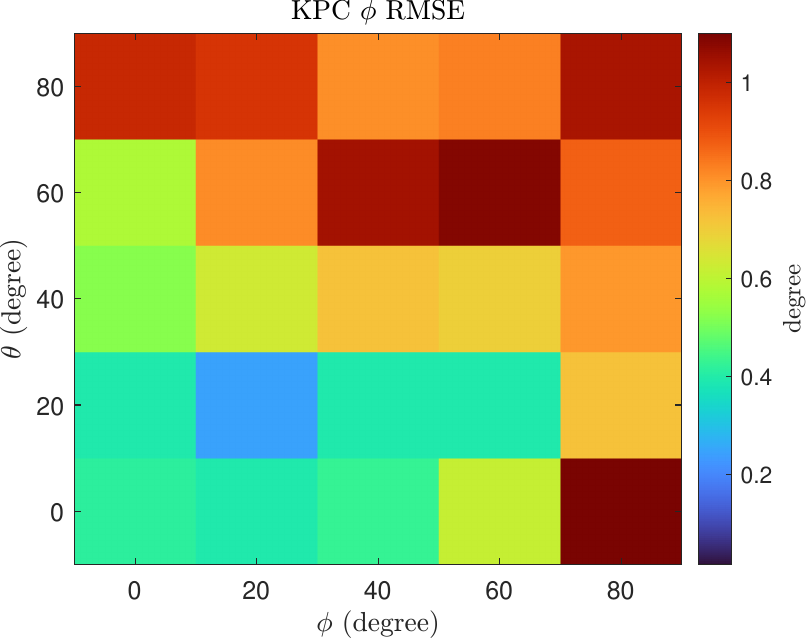}
  }
  \caption{Comparison of elevation and azimuth angular estimation RMSE between spherical DCAA and conventional UPA using KPC-based HBF. In each figure, the $x$-axis and $y$-axis represent the azimuth and elevation UAV swarm center, respectively. (a) elevation angle estimation of spherical DCAA, (b) elevation angle estimation of conventional UPA using KPC-based HBF, (c) azimuth angle estimation of spherical DCAA, (d) azimuth angle estimation of conventional UPA using KPC-based HBF. }
  \label{fig:angle compare}
\end{figure*}

  

  

Fig. \ref{fig:rate} compares spectrum efficiency between the two types of array with and without directional antenna elements. It can be seen that when the directional antenna elements are employed, spherical DCAA achieves significantly higher spectrum efficiency than that of conventional UPA using KPC-based HBF across different transmit SNR levels. This performance improvement is attributed to the enhanced beamforming gain of the spherical DCAA. In contrast, when isotropic antennas are used, conventional UPA using KPC-based HBF slightly outperforms spherical DCAA, as spherical DCAA’s finer angular resolution makes it more challenging to align exactly with the signal directions. In this condition, the signal strength of spherical DCAA degrades faster than conventional UPA.

\begin{figure}[htbp] 
        \centering \includegraphics[width=0.65\columnwidth]{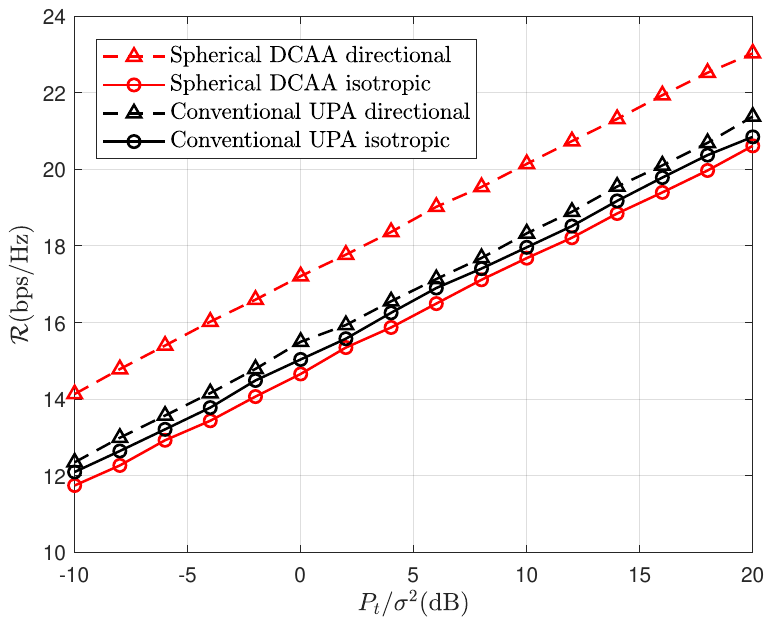}
        \caption{\label{fig:rate}Comparison of spectrum efficiency between spherical DCAA and conventional UPA using KPC-based HBF with and without directive antenna elements.}
\end{figure}

  

\section{Conclusion}\label{sec:conclusion}
In this paper, we proposed a novel spherical DCAA architecture for low-altitude UAV swarm ISAC. Compared to conventional arrays with hybrid analog/digital beamforming, the spherical DCAA offers significant hardware cost reduction, improved energy focusing, and superior angular resolution—with uniform elevation resolution and elevation-dependent azimuth resolution. A systematic spherical arrangement of sUPAs along with a selection network was designed, and an efficient sensing algorithm was developed for joint parameter estimation. Simulation results demonstrate that the spherical DCAA outperforms conventional arrays in angular resolution, target detection, and spectral efficiency, confirming its strong potential for future UAV swarm ISAC systems.

{\appendices

\section{Proof of Lemma \ref{lemma:sUPA beamwidth}}\label{app:sUPA beamwidth}

According to equation \eqref{eq:null 1}, for the main lobe beamwidth in the azimuth direction, $\theta$ is fixed as $\theta=\vartheta$ and the null points satisfy
\begin{equation}\label{eq:horizontal null 1}
\begin{aligned}
    &\phi_{\mathrm{null}}=\eta+\arcsin\Big(\frac{2p}{M\cos\vartheta}\Big),\\
{\text{or }} 
    &\phi_{\mathrm{null}}=\eta+\arccos\Big(1-\frac{2q}{M\cos\vartheta\sin\vartheta}\Big),
        \end{aligned}
\end{equation}
where $p=\pm1,q=\pm1$. Note that $\arcsin\big(\frac{2p}{M\cos\vartheta}\big)<\arccos\big(1-\frac{2p}{M\cos\vartheta}\big)<\arccos\big(1-\frac{2q}{M\cos\vartheta\sin\vartheta}\big)$,
the adjacent null points in the $\phi$ direction are 
\begin{equation}
\phi_{\mathrm{null}}=\eta+\arcsin\big(\frac{2p}{M\cos\vartheta}\big).
\end{equation}

By setting $p=\pm 1$, we have
\begin{equation}
    \Delta\phi=2\arcsin\big(\frac{2}{M\cos\vartheta}\big).
\end{equation}

Similarly, For the main lobe beam width in the elevation direction, $\phi$ is fixed as $\phi=\eta$ and the adjacent null points satisfy
\begin{equation}\label{eq:horizontal null 3}
    \theta_{\mathrm{null}}=\vartheta+\arcsin\Big(\frac{2q}{M}\Big),
\end{equation}
where $q=\pm1$, which yields
\begin{equation}
    \Delta\theta=2\arcsin\big(\frac{2}{M}\big).
\end{equation}

Therefore, the null to null beamwidth of the main lobe can be obtained as
\begin{equation}
    \textbf{BW}(\vartheta,\eta)=\bigg(2\arcsin\big(\frac{2}{M\cos\vartheta}\big),2\arcsin\big(\frac{2}{M}\big)\bigg).
\end{equation}

\section{Proof of Theorem \ref{the:angle_separation}}\label{app:minimum angle}
\begin{figure}[htbp] 
    \centering 
    \includegraphics[width=0.45\columnwidth]{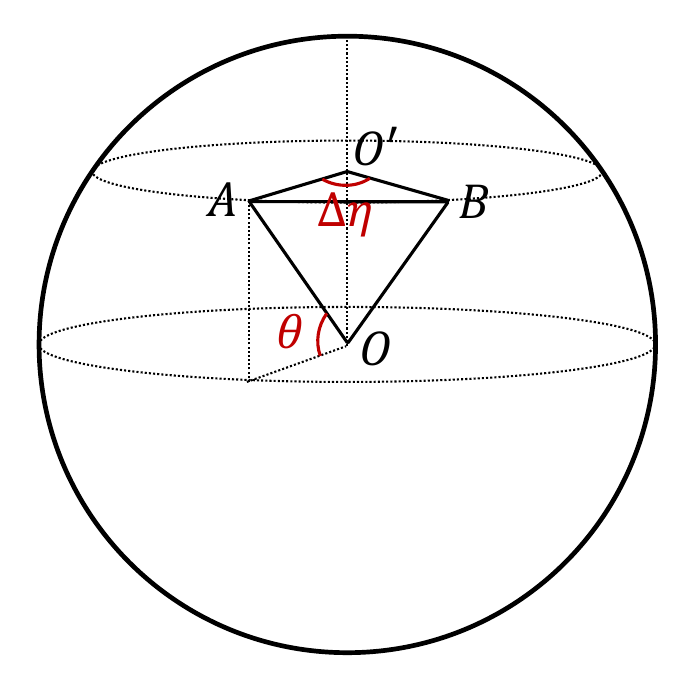}
    \caption{Demonstration of adjacent sUPAs with same elevation angle.}
    \label{fig:the_angle_separation}
\end{figure}

The direction of a sUPA is given by
\begin{equation}
    (\eta_{p,q},\vartheta_q) = \left(p \arcsin\left(\frac{2}{M\cos(q\arcsin\frac{2}{M})}\right),\; q\arcsin\frac{2}{M}\right).
\end{equation}

Let two sUPAs $A$ and $B$ have directions $(\eta_a,\vartheta_a)$ and $(\eta_b,\vartheta_b)$, respectively.  
Regarded as points on a unit sphere centered at $O$, their Cartesian coordinates are
\begin{align}
    (x_a,y_a,z_a) &= (\cos\vartheta_a\cos\eta_a,\;\cos\vartheta_a\sin\eta_a,\;\sin\vartheta_a), \\
    (x_b,y_b,z_b) &= (\cos\vartheta_b\cos\eta_b,\;\cos\vartheta_b\sin\eta_b,\;\sin\vartheta_b).
\end{align}

Hence the cosine of the angle $\alpha^+$ between $\overrightarrow{OA}$ and $\overrightarrow{OB}$ is
\begin{equation}\label{eq:cos_alpha}
    \cos\alpha^+ = \overrightarrow{OA}\cdot\overrightarrow{OB}
    = \cos\vartheta_a\cos\vartheta_b\cos(\eta_a-\eta_b) + \sin\vartheta_a\sin\vartheta_b.
\end{equation}

\noindent\textbf{Case 1: Different elevation angles.}

If $\vartheta_a\neq\vartheta_b$, then from \eqref{eq:cos_alpha},
\begin{equation*}
    \cos\alpha^+ \le \cos(\vartheta_a-\vartheta_b),
\end{equation*}
so that $|\alpha^+|\ge|\vartheta_a-\vartheta_b|\ge\arcsin\frac{2}{M}$.

\noindent\textbf{Case 2: Same elevation angle.}

Let $\vartheta_a=\vartheta_b=\vartheta$.  
Denote $c=\cos\vartheta$, $s=\sin\vartheta$.  
From \eqref{eq:cos_alpha} and the expression of $\eta_{p,q}$, we have
\begin{equation}\label{eq:cos_beta}
    \cos\alpha^+ = c^2\cos\!\left(\arcsin\frac{2}{Mc}\right) + s^2
    = c^2\sqrt{1-\frac{4}{M^2c^2}} + s^2 \triangleq f(\vartheta),
\end{equation}
where $0<\vartheta<\arccos\frac{2}{M}$ (so that $c>\frac{2}{M}$).  
Define $t = Mc\in(2,M)$ and set
\begin{equation*}
    w(t) = t - \frac{t^2-2}{\sqrt{t^2-4}}.
\end{equation*}

Differentiating $f(\vartheta)$ gives
\begin{equation}\label{eq:f_prime}
    f'(\vartheta) = \frac{2s}{M}\,w(t).
\end{equation}

Since $s>0$ for $\vartheta\in(0,\arccos\frac{2}{M})$, the sign of $f'(\vartheta)$ is that of $w(t)$.  
The function $w(t)$ is continuous on $(2,M)$ and has no zero there:  
if $w(t)=0$, then $t\sqrt{t^2-4}=t^2-2$, squaring which leads to the contradiction $0=4$.  
Evaluating at $t=3$ gives $w(3)=3-7/\sqrt{5}\approx-0.13<0$; hence $w(t)<0$ for all $t\in(2,M)$.  
Consequently $f'(\vartheta)<0$, so $f(\vartheta)$ is strictly decreasing on $[0,\arccos\frac{2}{M})$.  
Its maximum is therefore attained at $\vartheta=0$:
\begin{equation*}
    f(\vartheta) \le f(0) = \sqrt{1-\frac{4}{M^2}}.
\end{equation*}

From \eqref{eq:cos_beta} we obtain $\cos\alpha^+ \le \sqrt{1-\frac{4}{M^2}}$, which implies
\begin{equation*}
    \alpha^+ \ge \arccos\sqrt{1-\frac{4}{M^2}} = \arcsin\frac{2}{M}.
\end{equation*}

\noindent\textbf{Combining both cases}, we conclude that for any two sUPAs,
\begin{equation*}
    \boxed{\;\alpha^+ \ge \arcsin\frac{2}{M}\;.}
\end{equation*}

\section{Proof of Theorem \ref{the:resolution}}\label{app:resolution}
For spherical DCAA, according to \eqref{eq:sUPA response with rotate} and \eqref{eq:angular resolution vertical}-\eqref{eq:lvarphi}, by letting $\phi=\phi'$, we can get $\sin(\theta-\theta')=\pm2/M$, which yields $\theta_1=\theta'-\arcsin\frac{2}{M},\theta_2=\theta'+\arcsin\frac{2}{M}$. Then, the elevation angular resolution can be obtained as $\gamma_v^{\mathrm{DCAA}}(\phi',\theta')=\frac{1}{2}\Delta_\theta(\phi',\theta')=\frac{1}{2}|\theta_1-\theta_2|=\arcsin\frac{2}{M}$. Similarly, by letting $\theta=\theta'$, we will have $\cos\theta'\sin(\phi-\phi')=\pm2/M$ or $\sin\theta'\cos\theta'(1-\cos(\phi-\phi'))=\pm2/M$. Based on previous result \eqref{eq:horizontal null 1}-\eqref{eq:horizontal null 3}, $\phi_1=\phi'-\arcsin(\frac{2}{M\cos\theta'}),\phi_2=\phi'+\arcsin(\frac{2}{M\cos\theta'})$, and the azimuth angular resolution can be obtained as $\gamma_h^{\mathrm{DCAA}}(\phi',\theta')=\frac{1}{2}\Delta_\phi(\phi',\theta')=\frac{1}{2}|\phi_1-\phi_2|=\arcsin\frac{2}{M\cos\theta'}$

For conventional UPA using KPC-based HBF, by fixing $\phi$ as $\phi=\arcsin\frac{\sin\phi'}{\cos\theta'}$ in \eqref{eq:KPC response},
all the possible adjacent nulls regarding $\theta$ need to satisfy either one of the following equations
\begin{equation}
    \cos\theta\frac{\sin\phi'}{\cos\theta'}-\sin\phi'=\pm\frac{2}{M} {\text{ or}}\enspace
    \sin\theta-\sin\theta'=\pm\frac{2}{M}.
\end{equation}

By solving for $\theta$, we have
\begin{equation}
\begin{aligned}
   & \theta=(-1)^n\arcsin(\sin\theta'\pm\frac{2}{M})+n\pi,n\in\mathbb{Z}, \\
   {\text{or}}\enspace &\theta=\pm\arccos\Big(\cos\theta'\Big(1\pm\frac{2}{M\sin\phi'}\Big)\Big)+2m\pi,m\in\mathbb{Z}.
\end{aligned}
\end{equation}

Then we find the sets of nulls that are bigger and smaller than $\theta'$, denoted by $\Theta^+= \{\theta|\theta>\theta'\},\Theta^-=\{\theta|\theta<\theta'\}$. According to \eqref{eq:ltheta} and \eqref{eq:angular resolution vertical}, we have
\begin{equation}
    \theta_1=\min\Theta^+,\theta_2=\max\Theta^-.
\end{equation}

\begin{equation}
\begin{aligned}
    &\gamma_v^{\mathrm{UPA}}(\phi',\theta')=\frac{1}{2}\Delta_\theta(\phi',\theta')=\frac{1}{2}|\theta_1-\theta_2|\\
    &=\frac{1}{2}\big(\min\Theta^+-\max\Theta^-\big)\\
    &=\left\{\begin{matrix}
  E-F, &\mathrm{if}\ X\le1,Y>1,|Z|\ge1 \\
  G-F, &\mathrm{if}\ X>1,Y>1,|Z|\ge1 \\
  A-B,&\mathrm{if}\ X\le1,Y>1,|Z|\le1 \\
  A-C,&\mathrm{if}\ X\le1,Y\le1,|Z|\le1 \\
  D-B,&\mathrm{if}\ X>1,Y>1,|Z|\le1\\
  D-C,&\mathrm{if}\ X>1,Y\le1,|Z|\le1
\end{matrix}\right.,
\end{aligned}
\end{equation}
where 
$A=\min\big\{\arcsin(\sin|\theta'|+\frac{2}{M}),\arccos(\cos|\theta'|(1-\frac{2}{M\sin|\phi'|}))\big\}$, 
$B=\max\big\{ \arcsin(\sin|\theta'|-\frac{2}{M}) , -\arccos(\cos|\theta'|(1-\frac{2}{M\sin|\phi'|}))\big\}$,
$C=\max\big\{ \arcsin(\sin|\theta'|-\frac{2}{M}) , \arccos(\cos|\theta'|(1+\frac{2}{M\sin|\phi'|}))\big\}$,
$D=\max\big\{ \pi-\arcsin(\sin|\theta'|-\frac{2}{M}) , \arccos(\cos|\theta'|(1-\frac{2}{M\sin|\phi'|}))\big\}$,
$E=\arcsin(\sin|\theta'|+\frac{2}{M})$,
$F=\arcsin(\sin|\theta'|-\frac{2}{M})$,
$G=\pi-\arcsin(\sin|\theta'|-\frac{2}{M})$
$X=\sin|\theta'|+\frac{2}{M}$, 
$Y=\cos|\theta'|(1+\frac{2}{M\sin|\phi'|})$, 
$Z=\cos|\theta'|(1-\frac{2}{M\sin|\phi'|})$.

Similarly, the possible adjacent nulls regarding $\phi$ need to satisfy the following equation
\begin{equation}
    \cos\theta'\sin\phi-\sin\phi'=\pm2/M.
\end{equation}

By solving for $\phi$, we have
\begin{equation}
    \phi=(-1)^n\arcsin\Big(\frac{\sin\phi'\pm2/M}{\cos\theta'}\Big)+n\pi,n\in\mathbb{Z}.
\end{equation}

Then we find the sets of nulls that are bigger and smaller than $\phi'$, denoted by $\Phi^+=\{\phi|\phi>\phi'\}$ and $\Phi^-=\{\phi|\phi<\phi'\}$, respectively. According to \eqref{eq:lvarphi} and \eqref{eq:angular resolution vertical}, we have
\begin{equation}
    \phi_1=\min\Phi^+,\phi_2=\max\Phi^-.
\end{equation}

\begin{equation}
\begin{aligned}
    \gamma_h^{\mathrm{UPA}}(\phi',\theta')&=\frac{1}{2}\Delta_\phi(\phi',\theta')=\frac{1}{2}|\phi_1-\phi_2|\\
    &=\frac{1}{2}\big(\min\Phi^+-\max\Phi^-\big).
\end{aligned}
\end{equation}

The complete equation is given in \eqref{eq:long}



\begin{figure*}[b]
\begin{equation}\label{eq:long}
\begin{aligned}
   \gamma_h^{\mathrm{UPA}}(\phi',\theta')
    =\left\{\begin{matrix}
  \arcsin\Big(\frac{\sin\phi'+2/M}{\cos\theta'}\Big)-\arcsin\Big(\frac{\sin\phi'-2/M}{\cos\theta'}\Big)& \mathrm{if}\ |\frac{\sin\phi'+2/M}{\cos\theta'}|<1,\ |\frac{\sin\phi'-2/M}{\cos\theta'}|<1\\
  \pi-2\arcsin\Big(\frac{\sin\phi'-2/M}{\cos\theta'}\Big)&\mathrm{if}\ |\frac{\sin\phi'+2/M}{\cos\theta'}|>1,\ |\frac{\sin\phi'-2/M}{\cos\theta'}|<1 \\
  \pi+\arcsin\Big(\frac{\sin\phi'+2/M}{\cos\theta'}\Big)+\arcsin\Big(\frac{\sin\phi'+2/M}{\cos\theta'}\Big)&\mathrm{if}\ |\frac{\sin\phi'+2/M}{\cos\theta'}|<1,\ |\frac{\sin\phi'-2/M}{\cos\theta'}|>1 \\
  \mathrm{NaN}&\mathrm{if}\ |\frac{\sin\phi'+2/M}{\cos\theta'}|>1,\ |\frac{\sin\phi'-2/M}{\cos\theta'}|>1
\end{matrix}\right.
\end{aligned}.
\end{equation}
\end{figure*}

\bibliographystyle{IEEEtran}
\bibliography{IEEEabrv,0reference}


 




\vfill

\end{document}